\documentclass{thesis}
\usepackage{fancyvrb}
\usepackage{fullpage}
\usepackage[utf8]{inputenc}
\usepackage[T1]{fontenc}
\usepackage{graphics}
\usepackage{varwidth}

\usepackage{amsmath}
\usepackage{amssymb}
\usepackage{placeins}
\usepackage{amsthm}
\usepackage[noend]{algpseudocode}
\usepackage{tikz}
\usepackage{comment}
\usetikzlibrary{arrows,decorations.pathreplacing}
\newcommand{\atomic}[4]{\node(lab#1) at (#3,-#4+0.5) {#2};\draw (#3,-#4-0.1) to (lab#1);\node[inner sep=0pt](#1) at (#3,-#4) {};}
\algblockdefx{Struct}{EndStruct}[1]{\textbf{struct} \textsc{#1}}{}
\algnewcommand{\algorithmicgoto}{\textbf{go to}}
\algnewcommand{\Goto}{\textbf{goto} }
\algnewcommand{\Label}{\State\unskip}
\def\none{\textbf{none} }
\def\T{\ensuremath{\operatorname{\mathcal{T}}}\text{ }}
\def\int{\ensuremath{\operatorname{\textbf{int}}}}
\algnotext{EndStruct}
\newtheorem{lemma}{Lemma}
\newtheorem{observation}{Observation}
\newtheorem{theorem}{Theorem}
\newtheorem{example}{Example}
\newtheorem{definition}{Definition}

\newcommand{\fn}[1]{\textsc{#1}}
\newcommand{\var}[2]{\textbf{var }#1 : #2}
\newcommand{\arrayspec}[1]{\text{array}[#1]\text{ of }}
\usepackage{float}
\newfloat{Listing}{t}{lop}
\author{Robert Obryk}
\title{Write-and-f-array: implementation and an application}
\nralbumu{1040504}
\kierunek{Computer Science - IT Analyst}
\opiekun{dr Grzegorz Herman}
\date{September 2013}
\VerbatimFootnotes
\begin{document}

\maketitle
\begin{abstract}
	We introduce a new shared memory object: the write-and-f-array, provide its wait-free implementation and use it to construct an improved wait-free implementation of the fetch-and-add object. The write-and-f-array generalizes single-writer write-and-snapshot\cite{write-and-snap} object in a similar way that the f-array\cite{f-array} generalizes the multi-writer snapshot object. More specifically, a write-and-f-array is parameterized by an associative operator $f$ and is conceptually an array with two atomic operations:

\begin{itemize}
\item write-and-f modifies a single array's element and returns the result of applying $f$ to all the elements,
\item read returns the result of applying $f$ to all the array's elements.
\end{itemize}

We provide a wait-free implementation of an $N$-element write-and-f-array with $O(N \log N)$ memory complexity, $O(\log^3 N)$ step complexity of the write-and-f operation and $O(1)$ step complexity of the read operation. The implementation uses CAS objects and requires their size to be $\Omega(\log M)$, where $M$ is the total number of write-and-f operations executed. We also show, how it can be modified to achieve $O(\log^2 N)$ step complexity of write-and-f, while increasing the memory complexity to $O(N \log^2 N)$.

The write-and-f-array can be applied to create a fetch-and-add object for $P$ processes with $O(P \log P)$ memory complexity and $O(\log^3 P)$ step complexity of the fetch-and-add operation. This is the first implementation of fetch-and-add with polylogarithmic step complexity and subquadratic memory complexity that can be implemented without CAS or LL/SC objects of unrealistic size\cite{ellen-fai}.

Keywords: concurrency, wait-free, snapshot, fetch-and-add

\end{abstract}
\begin{abstract}
Wprowadzamy nowy obiekt współbieżny: write-and-f-array, podajemy jego implementację wait-free i używamy jej aby skonstruować ulepszoną implementację wait-free obiektu fetch-and-add. Write-and-f-array uogólnia write-and-snapshot dla
jednego pisarza w podobny sposób co f-array uogólnia snapshot. Dokładniej, write-and-f-array jest sparametryzowana przez łączny operator $f$ i koncepcyjnie jest tablicą z dwoma operacjami atomowymi:

\begin{itemize}
\item write-and-f, która modyfikuje jeden element tablicy i zwraca wynik aplikacji $f$ do wszystkich jej elementów,
\item read, która zwraca wynik aplikacji $f$ do wszystkich elementów tablicy.
\end{itemize}

Podajemy implementację wait-free $N$-elementowej write-and-f-array ze złożonością pamięciową $O(N \log N)$, złożonością krokową write-and-f $O(\log^3 N)$ i stałą złożonością operacji read. Implementacja ta
używa obiektów CAS o rozmiarze $\Omega(\log M)$, gdzie $M$ jest całkowitą liczbą wykonanych operacji write-and-f. Pokazujemy też modyfikację tej implementacji, która zmniejsza złożoność krokową write-and-f do $O(\log^2 N)$,
jednocześnie zwiększając złożoność pamięciową do $O(N \log^2 N)$.

Write-and-f-array znajduje zastosowanie w konstrukcji obiektu fetch-and-add dla $P$ procesorów ze złożonością pamięciową $O(P \log P)$ i złożonością krokową operacji $O(\log^3 P)$. Jest to pierwsza implementacja fetch-and-add
z polilogarytmiczną złożonością krokową operacji i podkwadratową złożonością pamięciową, która nie wymaga obiektów CAS lub LL/SC o nierealistycznie dużym rozmiarze.

Słowa kluczowe: współbieżność, wait-free, snapshot, fetch-and-add
\end{abstract}
\clearpage
\tableofcontents
\clearpage
\chapter{Introduction}

The advent of popular manycore systems has made concurrent programming for shared memory systems an important
topic. With increasing parallelism available in a single shared-memory system, performance of techniques
used to communicate between concurrently executing threads, or to access memory common to many such threads,
is becoming one of the most significant components of the performance of an application running on such a system.
In many cases of communication-heavy tasks, simple coarse-grained locking is not enough to yield good performance,
so programmers need to resort to lock-less communication and concurrent data structures. This work provides a new
data structure with an implementation that can be used concurrently and doesn't use locks, and uses it to create
an implementation of the fetch-and-add object (a kind of counter) with improved memory usage.

\section{Concurrent Objects}
A useful abstraction in a single-threaded system is an object: an entity with a set of methods one can invoke,
and a specification of semantics of these methods. Naturally, a program can use multiple objects and invoke
their methods in any order.

We want to use a similar abstraction to model interaction between threads in a multi-threaded system.
Consider a set of independent, concurrently running threads (running possibly
distinct code) that can only communicate by calling methods on some objects. We place no assumptions on
the speed of execution and delays of the threads -- they may wait arbitrarily long before a method call.
We also assume that method calls on the objects complete instanteously, and that no two of them happen at the same time.
This allows us to define semantics of the objects in the same fashion we define them in the single-threaded case:
methods of every object are invoked in a known order, so single-threaded semantics suffice to determine the object's behaviour\footnote{The whole program may still be nondeterministic, due to nondeterministic scheduling.}.
We will call such objects \emph{concurrent objects} to emphasize that they may be accessed concurrently.




An object we will predominantly use in this work is a CAS (Compare-And-Swap) register.
We will specify its semantics by providing pseudocode which correctly implements this object in
a single-threaded program (this will be our method of choice of providing object's semantics):

\begin{algorithmic}[1]
\State\var{$x$}{\T}
	
\Function{rd}{}
	\State\Return $x$
\EndFunction

\Function{wr}{$y$}
	\State $x \gets y$
\EndFunction
\pagebreak
\Function{CAS}{$o, n$}
\If{$x = o$}
	\State $x \gets n$
	\State\Return true
\Else
	\State\Return false
\EndIf
\EndFunction
\end{algorithmic}

Intuitively, such a register holds a value that can be read (by \fn{rd}), modified unconditionally (by \fn{wr})
and modified conditionally (by \fn{CAS}). It's interesting due to its universality properties\cite{cas-univ} 
and because it is
commonly seen in real-world hardware. 

\begin{example}\label{ex-simple}
Let's consider two threads, executing following pseudocodes (X and Y are two CAS objects, with initial value 0):

\begin{enumerate}
\item\begin{algorithmic}[1]
\State X.\fn{wr}(1)
\State Y.\fn{rd}()
\end{algorithmic}

\item\begin{algorithmic}[1]
\State Y.\fn{wr}(1)
\State X.\fn{rd}()
\end{algorithmic}
\end{enumerate}

Figure~\ref{fig-ex-simple} presents a possible execution of such two threads. Note that in any correct execution,
at least one of the \fn{rd}() calls returns 1.
\end{example}

\begin{figure}[!h]
\begin{tikzpicture}
	\atomic{x_wr_1}{X.\fn{wr}(1)}{0.5}{0}
	\atomic{y_rd_1}{Y.\fn{rd}()=0}{2.5}{0}
	\atomic{y_wr_1}{Y.\fn{wr}(1)}{3.5}{1}
	\atomic{x_rd_1}{X.\fn{rd}()=1}{5.5}{1}
	\draw (0,0) to (6,0);
	\draw (0,-1) to (6,-1);

	\atomic{x_wr_2}{X.\fn{wr}(1)}{8.0}{0}
	\atomic{y_rd_2}{Y.\fn{rd}()=1}{10.0}{0}
	\atomic{y_wr_2}{Y.\fn{wr}(1)}{8.5}{1}
	\atomic{x_rd_2}{X.\fn{rd}()=1}{10.5}{1}
	\draw (7.5,0) to (11,0);
	\draw (7.5,-1) to (11,-1);


\end{tikzpicture}

\caption{Two possible interleaved executions of threads from Example~\ref{ex-simple}.}
\label{fig-ex-simple}
\end{figure}
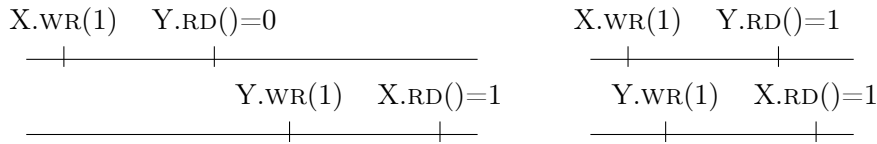

\section{Serialization}
In many situations it would be convenient to have more complicated concurrent objects: counters that can be incremented,
queues that can be pushed to and popped from, etc. Alas, real-world hardware usually doesn't implement such objects
directly. We will thus implement them using CAS registers or other simpler objects: we will substitute a set of simpler objects for every such object and for every method call on this object, execute a procedure that implements it instead.
However, these prodecures might run concurrently, so we can't blindly use a single-threaded implementation of the object.
Obviously, we will want such substitution to preserve semantics of the program in which it was substituted. The following condition immediately implies
this\footnote{For a detailed discussion of subtleties related to this definition, refer to chapter 3.3-3.6 of \cite{TheArt}}:

\begin{definition}
An implementation of a concurrent object \emph{serializes} iff for every execution of a program with the implementation
substituted for the object:
For every call to object's method that completes, we can define a \emph{serialization point} in its execution
interval (between its first and last constituent operation), such that all these calls would return the same values if they happened instaneously at their serialization
points.
\end{definition}

An observant reader might note that if all procedures in an implementation never terminate, the definition is
vacously true. Indeed, we will present some terminations guarantees separately, but first let us consider an
example implementation of a simple structure.





\begin{example}
Let's consider an increase-only counter: a shared object with the following semantics:

\begin{algorithmic}[1]
\State\var{$x$}{\int}

\Function{inc}{}
	\State$x \gets x + 1$
\EndFunction
\Function{read}{}
	\State\Return$x$
\EndFunction
\end{algorithmic}

We will see that the following implementation serializes to this shared object:

\begin{algorithmic}[1]
\State\var{$V$}{\int} \Comment{A CAS object holding an \int}

\Function{inc}{}
	\Repeat
	\State$o \gets V.\fn{rd}()$ \label{trivial-faa-rd}
	\Until{$V.\fn{CAS}(o, o+1)$}
\EndFunction

\Function{read}{}
	\State\Return $V.\fn{rd}()$
\EndFunction
\end{algorithmic}

Note that every call to \fn{inc} executes exactly one successful call to $V.\fn{CAS}$. Let us claim that the serialization
point of \fn{inc} is that successful call to \fn{CAS}, and that the serialization point of \fn{read} is the call to
$V.\fn{rd}$. We can easily see that the number returned by \fn{rd} is exactly the number of \fn{inc}s that serialized
before that \fn{read}.
\end{example}

We can also see that this implementation satisfies some global progress property -- when the CAS in \fn{inc} fails,
another \fn{inc} must have suceeded very recently. In fact, it must have succeeded between the most recent execution
of line~\ref{trivial-faa-rd} and the failed CAS. Still, a single \fn{inc} operation can fail to terminate.

\section{Wait-freedom}

Consider an alternative implementation of an increment-only counter. In the following pseudocode, $P$ denotes
the number of threads in the system.

\begin{algorithmic}[1]
	\State\var{$T$}{\arrayspec{P}$\int$} \Comment{T is an array of $P$ CAS objects of type \int}
\Function{inc$_p$}{}
	\State $x \gets T[p].\fn{rd}()$
	\State $T[p].\fn{wr}(x + 1)$
\EndFunction
\Function{read}{}
	\State $R \gets 0$
	\For{$i \gets 1 \mbox{ to } P$}
	\State $R \gets R + T[i].\fn{rd}$
	\EndFor
	\State\Return $R$
\EndFunction
\end{algorithmic}

Here we differentiate between threads: the index $p$ in \fn{inc} corresponds to the current thread's ID from range
$\{0, \ldots, P-1\}$. Let us first see that this implementation actually serializes to the increment-only counter.
Obviously, we need to choose the call to WRITE as the serialization point of \fn{inc}. From the monotonicity
of entries of T we can infer that \fn{read} returns a value between (inclusive) the number of calls to \fn{inc} that have
serialized before \fn{read} has started and the number of calls serialized before it has finished.
Thus if a call to \fn{read} returns $x$, there is a point during its
execution when exactly $x$ \fn{inc}s had been serialized. We can choose this point to be the serialization point of
\fn{read}. Note that it need not correspond to an action executed by the implementation of \fn{read}.

We can see that this implementation satisfies a very strong termination condition -- the number of steps every
procedure takes is bounded and the bound depends only on the number of threads (and not on the behaviour of
the scheduler). We will call such implementations \emph{wait-free}. For a discussion of weaker
termination guarantees, see chapter 3.7 of \cite{TheArt}.

Every object that has a single-threaded specification has a wait-free implementation\cite{cas-univ} using
only CAS registers.
However, all currently-known such constructions yield objects with step complexities inflated at least by a factor
of the number of threads.
Thus, creating faster wait-free implementations of data structures is interesting, also from a practical
point of view.

\section{Snapshots}

The object implemented in this work is a generalization of a popular building block for multithreaded objects
-- an atomic snapshot object\cite{snapshot-idea}. Its semantics can be defined by the following single-threaded pseudocode:

\begin{algorithmic}[1]
	\State\var{$T$}{\arrayspec{N}$\T$}
\Function{update$_i$}{$x$}
	\State $T[i] \gets x$
\EndFunction
\Function{scan}{}
	\For{$i \gets 0 \mbox{ to } N$}
		\State $R[i] \gets T[i]$
	\EndFor
	\State\Return $R$
\EndFunction
\end{algorithmic}

Intuitively, this object represents an array that can be modified piecewise and read all at once.
There are known wait-free implementations of an $N$-element atomic snapshot with $O(1)$ \fn{update} and $O(N)$ \fn{scan} step complexities\cite{snapshot-impl}. These complexities are obviously
optimal.

One can try to extend this structure in many ways. Two of them are particularly interesting for us.
First, we can merge the update and scan operations
into an operation that does an update and a scan\footnote{Obviously, one can call these operations sequentially
on a normal snapshot object. However, another update operation could then happen between the update and the scan.
This cannot happen if the update and scan calls are merged into a single atomic operation.}.
This object, which we will call write-and-snapshot\cite{write-and-snap}, can be defined by the following single-threaded specification:

\begin{algorithmic}[1]
	\State\var{$T$}{\arrayspec{N}$\T$}
\Function{update-and-scan$_i$}{$x$}
	\State $T[i] \gets x$
	\For{$i \gets 0 \mbox{ to } N$}
		\State $R[i] \gets T[i]$
	\EndFor
	\State\Return $R$
\EndFunction
\end{algorithmic}

For the second extension, let us note that in many cases one doesn't need the \fn{scan} operation to return the whole array, but rather some sort of its digest. The f-arrays\cite{f-array} allow one to do precisely that for digests expressible as
a result of applying an associative operation to all the array's elements (for example, if we need to retrieve the sum of them). If we call the operator $f$, the following is a specification of an f-array:

\begin{algorithmic}[1]
	\State\var{$T$}{\arrayspec{N}$\T$}
\Function{update$_i$}{$x$}
	\State $T[i] \gets x$
\EndFunction
\Function{scan}{}
	\State\Return $f(T[0], T[1], \ldots, T[N-1])$
\EndFunction
\end{algorithmic}

There is an implementation of f-array with step complexities of \fn{update} and \fn{scan} being respectively $O(\log N)$ and $O(1)$ and the memory complexity
being linear in $N$.

This work introduces a write-and-f-array: an object that combines these two modifications. Its semantics are defined by the single-threaded specification below:

\begin{algorithmic}[1]
	\State\var{$T$}{\arrayspec{N}$\T$}
	\Function{update-and-$f_i$}{$x$}
		\State $T[i] \gets x$
		\State\Return $f(T[0], T[1], \ldots, T[N-1])$
	\EndFunction
\end{algorithmic}

The main result of this work is an implementation of a single-writer
write-and-f-array: that is, one that disallows concurrent modifications to the same array element. The implementation
uses $O(N\log{}N)$ memory and the step complexity of the operation is $O(\log^3N)$. 

We then use this object to implement a fetch-and-add object. Fetch-and-add is a generalization of the increment-only counter from previous chapters. Its semantics are specified by the listing below:

\begin{algorithmic}[1]
\State\var{V}{\int}
\Function{fetch-and-add}{$x$}
	\State $r \gets V$
	\State $V \gets V + x$
	\State\Return $r$
\EndFunction
\end{algorithmic}

Fetch-and-add object can be used to produce unique identifiers, implement
mutual exclusion, barrier synchronization\cite{faa-sync}, or work queues\cite{faa-queue}. Its known wait-free shared memory implementations for $P$ processes have $\Omega(P^2)$ memory complexity and $O(\log^2 P)$ step complexity
\cite{ellen-fai}\cite{closed-object}. They also need to employ complicated memory management techniques to bound their memory use. Our implementation reduces the memory complexity to $O(P\log P)$ while maintaining polylogarithmic step complexity.


\chapter{The History Object}

We will first implement a helper object -- the history object. Intuitively, it contains a versioned memory cell and allows retrieval of past $N$ values of the cell. The cell holds objects of type \T. The semantics of this object are precisely specified by the following single-threaded implementation:
\begin{algorithmic}[1]
	\State\var{$H$}{\arrayspec{\ldots}$\T$} \Comment{An unbounded array of values}
	\State\var{$V$}{\int} \Comment{Current version}
	\Function{get-current}{}
		\State \Return $V$, $H[V]$
	\EndFunction
	\Function{get}{$v$}
	\If{$v \leq V - N \textbf{ or } v > V$}
			\State \Return \none
		\Else
		\State \Return $H[v]$
		\EndIf
	\EndFunction
	\Function{publish}{$v$, $T$}
		\If{$v = V + 1$}
			\State $H[v] \gets T$
			\State $V \gets v$
			\State \Return true
		\Else
			\State \Return false
		\EndIf
	\EndFunction
\end{algorithmic}

Our wait-free implementation will impose an additional constraint on its use: every call to \fn{publish} is parameterized by an integer in range $[0,P)$ and executions of \fn{publish$_p$} for the same $p$ can not run simultaneously\footnote{An obviously correct choice for the parameter is a unique thread ID.}.
	Both $N$ and $P$ must be known when the object is created.

Our implementation is presented below:

\begin{algorithmic}[1]
	\State\var{$S$}{$\textbf{pair}\left<v: \int, p: \int\right>$} \Comment{Version number and $p$ of the most recently published value}
	\State\var{$H$}{\arrayspec{N}$\textbf{pair}\left<v: \int, T: \T\right>$} \Comment{A circular buffer of recent $N$ values}
	\State\var{$L$}{\arrayspec{P}$\textbf{pair}\left<v: \int, T: \T\right>$} \Comment{Temporary storage for values being published}
	\Function{get-current}{}
		\State $s \gets S.\fn{rd}()$ \Comment{Possible serialization point}
		\State \fn{help}()
		\State \Return $H[s.v \mod N].\fn{rd}()$
	\EndFunction
	\Function{help}{} \Comment{Updates $H[]$ with value from $L[]$, if required}
		\State $s \gets S.\fn{rd}()$
		\State $l \gets L[s.p].\fn{rd}()$
		\State $h \gets H[s.v \mod N].\fn{rd}()$
		\If{$l.v = s.v \textbf{ and } h.v < s.v$}
			\State $H[s.v \mod N].\fn{CAS}(h, l)$ \label{hist-help-cas}
		\EndIf
	\EndFunction
	\pagebreak
	\Function{get}{v}
		\State $s \gets S.\fn{rd}()$ \label{hist-get-notyet-sp} \Comment{Possible serialization point}
		\If{$s.v < v$}
			\State \Return \none \label{hist-get-notyet} \Comment{The requested version hasn't been published yet}
		\EndIf
		\State \fn{help}()
		\State $h \gets H[v \mod N].\fn{rd}()$
		\If{$h.v = v$}
			\State \Return $h.T$
		\Else
			\State \Return \none \label{hist-get-tooold}
		\EndIf
	\EndFunction
	\Function{publish$_p$}{$v, T$}
		\State $s \gets S.\fn{rd}()$ \label{hist-pub-earlier-sp} \Comment{Possible serialization point}
		\If{$v \neq s.v + 1$}
			\State \Return \textbf{false} \label{hist-pub-earlier-exit}
		\EndIf
		\State \fn{help}()
		\State $L[p].\fn{wr}((v, T))$ \label{hist-pub-latest}
		\If{$S.\fn{CAS}(s, (v, p)$) \textbf{failed}} \Comment{Possible serialization point}\label{hist-pub-later-sp}
			\State \Return \textbf{false}
		\EndIf
		\State \Return \textbf{true}
	\EndFunction
\end{algorithmic}

As the comments indicate, $L[p]$ is used to temporarily hold the value being published by \fn{publish$_p$}. The array $H$ is used to hold, roughly,
the $N$ most recently published values (the most recently published value might be absent; exact semantics will be given by the lemmas below) together with their version numbers. The field $S$ holds the particulars of the most recently published value; the successful publish calls will serialize at the moment $S$ changes. Our implementation contains an additional function \fn{help}. It is used internally to write the most recently published value to the array $H$, if it isn't stored there already.

We will now prove some properties related to exact semantics of $H$ and $L$ and then use them to prove that our implementation serializes to the specification.


We posit that the serialization point of \fn{publish} is in~line~\ref{hist-pub-later-sp}, unless that line isn't reached. In that case (ie. when the call returns false in~line~\ref{hist-pub-earlier-exit})
the serialization point is in~line~\ref{hist-pub-earlier-sp}.


\begin{lemma}\label{hist-cv-latest}
If $S = (v, p)$ at time $t$, and at some later point in time $L[p] = (v, T)$, then there was a successful call to \fn{publish$_p$}$(v, T)$ with serialization point before time $t$.
\end{lemma}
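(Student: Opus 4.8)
The plan is to first establish a monotonicity invariant for $S$, then identify the unique publish call responsible for making $S = (v,p)$, and finally argue that the value this call stored in $L[p]$ is exactly the $(v,T)$ observed later. First I would observe that the version field $S.v$ is strictly increasing in time. The only writes to $S$ are the successful executions of $S.\fn{CAS}(s,(v,p))$ in line~\ref{hist-pub-later-sp}, and any call reaching that line has passed the test in line~\ref{hist-pub-earlier-sp}, guaranteeing that its argument satisfies $v = s.v + 1$; hence each successful publish increments $S.v$ by exactly one. Consequently $S$ equals $(v,p)$ during a single contiguous time interval, and there is a unique successful call — call it $C$ — whose CAS sets $S$ to $(v,p)$. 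Since the value written by \fn{publish$_q$} always carries $q$ in its second component and its version argument in the first, $C$ must be a successful \fn{publish$_p$}$(v,T_0)$ for some $T_0$, and its serialization point (the CAS in line~\ref{hist-pub-later-sp}) occurs at a time $t_0 \le t$: it cannot be after $t$, because $S=(v,p)$ already holds at $t$, and the next change of $S$ (to version $v+1$) happens strictly after $t$.

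It then remains only to show $T_0 = T$. Before its CAS, $C$ executed $L[p].\fn{wr}((v,T_0))$ in line~\ref{hist-pub-latest}, so this write precedes $t_0 \le t$, which is earlier than the time $t'$ at which we observe $L[p] = (v,T)$. The entry $L[p]$ is written only by \fn{publish$_p$} calls, and since such calls for a fixed $p$ never overlap, the writes to $L[p]$ are totally ordered and occur in call order. Hence the most recent write to $L[p]$ before $t'$ — the one determining its observed value $(v,T)$ — is either $C$'s own write or the write of some strictly later \fn{publish$_p$} call.

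The crux, which I expect to be the main obstacle, is to rule out the latter possibility, i.e. to pin the observed $L[p]$ value to $C$ rather than to a subsequent publish. Suppose a \fn{publish$_p$} call $C'$ later than $C$ performed the most recent write to $L[p]$ before $t'$. Being later and non-overlapping, $C'$ starts only after $C$ completes, hence after $t_0$, so by monotonicity the value $s'$ it reads in line~\ref{hist-pub-earlier-sp} satisfies $s'.v \ge v$. For $C'$ to reach line~\ref{hist-pub-latest} at all it must have passed the test with its version argument $v'$, so $v' = s'.v + 1 \ge v + 1 > v$, and it therefore writes a version strictly greater than $v$ into $L[p]$, contradicting that the observed version is $v$. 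Thus the defining write is $C$'s own, giving $(v,T_0) = (v,T)$ and hence $T_0 = T$; this exhibits $C$ as a successful \fn{publish$_p$}$(v,T)$ whose serialization point $t_0$ precedes $t$, as required. The two ingredients that make this final step work are precisely the strict monotonicity of $S.v$ and the assumption that calls to \fn{publish$_p$} for a fixed $p$ do not run concurrently.
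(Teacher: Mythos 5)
Your proposal is correct and follows essentially the same route as the paper's own proof: identify the unique successful \fn{publish$_p$} whose CAS installs $(v,p)$ into $S$, note that it has already written $(v,T)$ into $L[p]$, and rule out any later write of version $v$ to $L[p]$ using the monotonicity of $S.v$ together with the non-concurrency of \fn{publish$_p$} calls for a fixed $p$. The extra detail you supply (strict increase of $S.v$ by one per successful publish, and the explicit contradiction $v' = s'.v + 1 > v$) only makes explicit what the paper leaves as ``a simple observation.''
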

\begin{proof}
S must have been modified by a successful CAS in line~\ref{hist-pub-later-sp}. Obviously, a successful invocation of \fn{publish$_p$}$(v, T)$ must have executed that CAS. Let's denote it by $\pi$.
This invocation sets $L[p]$ to $(v, T)$. What remains to be proven is that no later invocation of \fn{publish$_\cdot$} will set $L[p]$ to $(v, x)$ for any x.
Only invocations of \fn{publish$_p$} modify $L[p]$ and they can start only after $\pi$ has finished. Together with a simple observation that $S.v$ is nondecreasing in time this implies
that any later invocation of \fn{publish} that sets $L[p]$ must have been called with a strictly greater $v$.
\end{proof}

\begin{observation}
\label{obs-hist-is-correct}
If $H[i] = (v, T)$ at some point then a successful call to \fn{publish$_\cdot$}$(v, T)$ has had its serialization point earlier.
\end{observation}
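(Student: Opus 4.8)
The plan is to use the fact that, apart from initialization, the only statement that ever modifies an entry of $H$ is the successful CAS on line~\ref{hist-help-cas} inside \fn{help}, and to trace the value it writes back to a publish via Lemma~\ref{hist-cv-latest}. Inspection of the pseudocode confirms this: \fn{get-current} and \fn{get} only read $H$, and \fn{publish} touches $H$ only indirectly through its call to \fn{help}.

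First I would fix the time at which we observe $H[i]=(v,T)$ and, assuming this is not the initial value of $H[i]$, take $\gamma$ to be the last successful execution of the line~\ref{hist-help-cas} CAS that stored $(v,T)$ into $H[i]$ before that time; such a $\gamma$ exists because $H[i]$ only ever changes through this CAS. Reading off \fn{help}, the value $\gamma$ writes is $l = L[s.p].\fn{rd}()$, stored at index $s.v \bmod N$, where $s = S.\fn{rd}()$ was read earlier in the same \fn{help} call and the guard preceding line~\ref{hist-help-cas} forces $l.v = s.v$. Consequently $v = s.v$ and $T = l.T$, so at the instant $t$ when $\gamma$'s enclosing \fn{help} read $S$ we had $S = (v, s.p)$, and at the strictly later instant when it read $L[s.p]$ we had $L[s.p] = (v, T)$.

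Next I would invoke Lemma~\ref{hist-cv-latest} with this $t$, which produces a successful call to \fn{publish$_{s.p}$}$(v, T)$ whose serialization point lies before $t$. Chaining the time order — serialization point before $t$, $t$ before the CAS $\gamma$, and $\gamma$ no later than the observation of $H[i]=(v,T)$ — shows that the serialization point precedes the observation, which is exactly the claim.

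The remaining case is when $H[i]$ still carries its initial value, which I would dispose of by the convention that the initial entries hold a version that no call to \fn{publish} ever produces (recall that publish only succeeds with $v = s.v+1 \ge 1$), so that the statement has no content for them. I expect the genuine work to be the tracing in the second paragraph; the one point that needs care is the time-ordering chain, since Lemma~\ref{hist-cv-latest} only places the serialization point before the read of $S$ inside \fn{help}, so one must verify that this read indeed precedes the writing CAS $\gamma$ and hence the observation.
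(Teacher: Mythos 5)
Your argument is correct and is precisely the justification the paper intends: the paper states this as an unproved Observation immediately after Lemma~\ref{hist-cv-latest} because the only write to $H$ is the CAS in line~\ref{hist-help-cas}, whose guard ($l.v = s.v$) sets up exactly the hypotheses of that lemma. Your careful time-ordering chain (serialization point before the read of $S$, which precedes the CAS, which precedes the observation) and the remark about initial values fill in the details the paper leaves implicit.
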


\begin{lemma}\label{lem-hist-help-comes}
Let $\pi$ be a call to \fn{publish$_p$}. Let $\pi_q$ be a successful call to \fn{publish$_q$} and $\pi_p$ a call to \fn{publish$_p$}, both with serialization points after serialization point of $\pi$. Then there is a call to \fn{help} that starts after the serialization point of $\pi$ and ends before the one of $\pi_q$ and 
before the execution of line~\ref{hist-pub-latest} in $\pi_p$.
\end{lemma}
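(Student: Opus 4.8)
The plan is to exhibit the required \fn{help} as one of two concrete candidates: the \fn{help} call $\eta_p$ executed inside $\pi_p$ just before line~\ref{hist-pub-latest}, and the \fn{help} call $\eta_q$ executed inside $\pi_q$ before its serialization point. First I would record the cheap facts. Since $\pi$ and $\pi_p$ are both \fn{publish$_p$} calls, the single-writer constraint forbids them from overlapping; as $\pi_p$'s serialization point lies after $\pi$'s, $\pi$ must be the earlier of the two, so $\pi$ finishes before $\pi_p$ starts. Consequently $\eta_p$ starts strictly after $\pi$'s serialization point, and by construction it ends before line~\ref{hist-pub-latest} of $\pi_p$. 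Likewise, inside any call that reaches its CAS the \fn{help} precedes line~\ref{hist-pub-later-sp}, so $\eta_q$ ends before $\pi_q$'s serialization point.

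Then I would split on the position of $\eta_p$. If $\eta_p$ ends before $\pi_q$'s serialization point, it already meets all three demands (it starts after $\pi$'s serialization point, ends before $\pi_q$'s, and ends before line~\ref{hist-pub-latest} of $\pi_p$), and we are done. Otherwise $\eta_p$ ends at or after $\pi_q$'s serialization point, and I would fall back on $\eta_q$. Two of the three demands are then immediate: $\eta_q$ ends before $\pi_q$'s serialization point by the previous paragraph, and since that point is at most the end of $\eta_p$, which in turn precedes line~\ref{hist-pub-latest} of $\pi_p$, the call $\eta_q$ also ends before that line.

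The one remaining — and genuinely delicate — obligation is to show that $\eta_q$ starts after $\pi$'s serialization point; this is the main obstacle. I would argue it through the strict monotonicity of $S.v$: every successful \fn{publish} resets $S.v$ to $s.v+1$ (this strengthens the ``$S.v$ nondecreasing'' observation used in Lemma~\ref{hist-cv-latest}), so $S$ never revisits a value and each version number labels a unique instant at which $S$ first attains it. Let $(v^\ast, p^\ast)$ be the value that $\pi_q$'s successful CAS removes from $S$; then $\pi_q$ must have read exactly this value at line~\ref{hist-pub-earlier-sp}, at an instant no earlier than the moment $S$ first attained version $v^\ast$. It therefore suffices to prove that $S$ attains version $v^\ast$ only after $\pi$'s serialization point, which reduces the whole problem to one inequality on version numbers.

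This is where the analysis of $\pi$'s serialization point enters, and where I expect the real friction to be. The key claim is that the version made current at $\pi$'s serialization point is at most $v^\ast$, and that the boundary case — equality $v^\ast$ — occurs only when $\pi$ itself installs that version at its serialization point, so that $\pi_q$'s read, being a distinct instantaneous operation, still comes strictly later. For a successful $\pi$ this is clean, since its serialization point is the CAS of line~\ref{hist-pub-later-sp} that sets $S.v$, and $S.v$ can only climb afterwards while staying below $v^\ast+1$ until $\pi_q$'s CAS. The fiddly part is carrying the argument uniformly through the cases where $\pi$ serializes at a \emph{failed} CAS or at the early read of line~\ref{hist-pub-earlier-sp}; there one has only that $\pi$'s serialization point \emph{observes} a version at most $v^\ast$ rather than installing it, and I would need to track carefully that the installer of $v^\ast$ cannot precede $\pi$'s serialization point. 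I anticipate this serialization-point bookkeeping, rather than the timing skeleton, to be the crux of the proof.
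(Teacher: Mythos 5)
Your timing skeleton is exactly the paper's: the same two candidate \fn{help} calls (the one inside $\pi_p$ and the one inside $\pi_q$), with the same closing observation that whichever of the two finishes earlier inherits both deadlines --- your case split on whether $\eta_p$ ends before the serialization point of $\pi_q$ is just a rephrasing of ``take the earlier finisher.'' The only divergence is in the one step you call delicate, and there the paper is far more direct: since $\pi_q$ succeeds, $S$ cannot change between $\pi_q$'s read at line~\ref{hist-pub-earlier-sp} and its CAS at line~\ref{hist-pub-later-sp}; the serialization point of $\pi$ lies before that CAS by hypothesis, and --- for a \emph{successful} $\pi$ --- it is itself a change of $S$, so it must precede the read and hence precede $\eta_q$. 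Your version-number bookkeeping reaches the same conclusion for successful $\pi$, just more laboriously. Your instinct that the failed-$\pi$ cases are the real friction is accurate, but they are not resolvable by sharper bookkeeping: when $\pi$ serializes at the read of line~\ref{hist-pub-earlier-sp} or at a failed CAS, that point does not mark a change of $S$, it can legitimately fall after $\pi_q$'s read (and after $\eta_q$ has already completed), and then neither candidate need satisfy all three conditions; in particular the reduction ``$S$ attains $v^\ast$ only after $\pi$'s serialization point'' is simply false there. The paper's own proof silently covers only successful $\pi$, and that is also the only case in which the lemma is invoked (in the proof of Lemma~\ref{lem-hist-is-complete}). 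So for the case that matters your argument is correct and structurally identical to the paper's; the residual obligation you set yourself for failed $\pi$ should be discharged by restricting the hypothesis, not by further analysis.
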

\begin{proof}
The serialization point of $\pi$ must happen before $\pi_q$ executes line~\ref{hist-pub-earlier-sp} -- if S has changed between line~\ref{hist-pub-earlier-sp} and line~\ref{hist-pub-later-sp} of $\pi_q$, $\pi_q$ would fail.
Thus the call to \fn{help} from $\pi_q$ will start after the serialization point of $\pi$ and will finish before the serialization point of $\pi_q$. The call to \fn{help}
from $\pi_p$ will start after serialization point of $\pi$ ($\pi_p$ may only start after $\pi$ has finished) and will finish before line~\ref{hist-pub-latest} of $\pi_p$.
Of these two calls to \fn{help}, the one that finishes earlier satisfies both conditions.
\end{proof}

\begin{lemma}
\label{lem-hist-is-complete}
If a call to \fn{help}() has executed fully (ie. started and finished) after the serialization point of a successful \fn{publish$_\cdot$}$(v, \cdot)$, then $H[v\mod N].v \geq v$.
\end{lemma}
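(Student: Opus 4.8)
The plan is to prove the statement by a nested induction: an outer strong induction on the version $v$, and, for each fixed $v$, an inner strong induction on the time at which the \fn{help} call completes. Before starting I would record the monotonicity facts that follow directly from the code and from the preceding lemmas: $S.v$ is nondecreasing and increases by exactly one at each successful CAS in line~\ref{hist-pub-later-sp}, so each version is published by a unique successful \fn{publish} and its serialization point $\sigma_v$ is well defined with $\sigma_v < \sigma_{v+1}$; $L[p].v$ is nondecreasing, since successive \fn{publish$_p$} calls do not overlap and read a nondecreasing $S.v$; and $H[i].v$ is nondecreasing while every value ever stored at index $i$ has version $\equiv i \pmod N$, because the CAS in line~\ref{hist-help-cas} writes $l$ with $l.v = s.v$ at index $s.v \bmod N$ and only when $h.v < s.v$. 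This last fact is the crucial structural one: at a fixed index the stored versions are spaced by $N$, so apart from $v$ itself the only versions $\le v$ congruent to $v$ are $v-N, v-2N, \dots$; in particular there is no stored version strictly between $v-N$ and $v$.

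Now fix $v$ and a \fn{help} call $\eta$ running entirely after $\sigma_v$, and let $s$ be the value it reads from $S$; monotonicity gives $s.v \ge v$. If $s.v > v$, then version $v+1$ has already been published, and the internal \fn{help} of \fn{publish}$(v+1)$ is itself a \fn{help} that runs entirely after $\sigma_v$ (it follows the read of $S=(v,\cdot)$ in line~\ref{hist-pub-earlier-sp}) and finishes before $\sigma_{v+1}\le\sigma_{s.v}\le\tau_S$, the moment $\eta$ reads $S$, hence before $\eta$ completes; the inner hypothesis applied to it, plus monotonicity of $H[v\bmod N].v$, already yields the claim. So the interesting case is $s.v=v$, where uniqueness of the publisher forces $s=(v,p)$. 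Here $\eta$ reads $l=L[p]$ and $h=H[v\bmod N]$. If $h.v\ge v$ we are done; if $l.v\neq v$ then $l.v>v$, so a later \fn{publish$_p$} has already overwritten $L[p]$ by line~\ref{hist-pub-latest}, and its internal \fn{help} is again a full \fn{help} after $\sigma_v$ finishing before $\eta$, to which the inner hypothesis applies.

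The remaining, and genuinely delicate, case is $s.v=v$, $l.v=v$, $h.v<v$: now $\eta$ executes $H[v\bmod N].\fn{CAS}(h,l)$ with $l.v=v$. If the CAS succeeds we are done. If it fails, some successful CAS $\gamma$ changed $H[v\bmod N]$ away from $h$ between $\eta$'s read of $h$ and its CAS; since $\gamma$ matched the old value $h$, it writes a version $v_\gamma>h.v$ with $v_\gamma\equiv v\pmod N$, and it suffices to show $v_\gamma\ge v$. This is the main obstacle: a priori a ``stale'', slow \fn{help} could install an older version and thereby leave $H[v\bmod N].v<v$. I would rule this out by combining the mod-$N$ spacing with the outer induction hypothesis. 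Suppose $v_\gamma<v$; then $v_\gamma\le v-N$. The version $v_\gamma$ is genuinely published (by Observation~\ref{obs-hist-is-correct}, applied to the value $\gamma$ writes), so $\sigma_{v_\gamma}$ exists and, as $v_\gamma<v$, the statement for version $v_\gamma$ is available. Since $v_\gamma+1\le v$ is published, the internal \fn{help} of \fn{publish}$(v_\gamma+1)$ is a full \fn{help} after $\sigma_{v_\gamma}$ finishing before $\sigma_{v_\gamma+1}\le\sigma_v$, so the outer hypothesis gives $H[v\bmod N].v\ge v_\gamma$ already before $\gamma$'s CAS; by monotonicity $H[v\bmod N]\neq h$ at $\gamma$'s CAS, contradicting that $\gamma$ matched $h$. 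Hence $v_\gamma\ge v$, so after $\gamma$ we have $H[v\bmod N].v\ge v$, which monotonicity carries to the completion of $\eta$. Both inductions are well founded because only finitely many operations occur before any fixed time, and for the smallest relevant versions ($v\le N$) there is no candidate $v_\gamma$ with $0<v_\gamma<v$ congruent to $v$, so the stale-write case cannot arise at all.
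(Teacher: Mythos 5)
Your proof is correct and follows essentially the same route as the paper's: an induction whose crux is that the versions stored at index $v \bmod N$ are congruent mod $N$ and nondecreasing, so once the induction hypothesis (applied to a version at most $v-N$, via the \fn{help} call embedded in a later successful \fn{publish}) forces $H[v\bmod N].v \geq v-N$ by the serialization point of version $v$, any subsequent change to that cell must jump to a version $\geq v$. The paper organizes this as a single induction on \fn{publish} serialization times and uses Lemma~\ref{lem-hist-help-comes} to reduce, without loss of generality, to a \fn{help} call that reads $s.v=v$ and $l.v=v$ --- precisely the cases that your inner induction and your by-contradiction analysis of the interfering CAS handle explicitly.
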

\begin{proof}
Let us first note that $H[v\mod N].v$ is nondecreasing.
It thus suffices to prove that the condition is met at some point before the end of the call to \fn{help}. We will do so by induction on the time at which the invocation of \fn{publish} serializes.
Consider a call to \fn{publish$_p$}$(v, \cdot)$ that serializes at time $t$, and a call to \fn{help} that starts after $t$. By Lemma~\ref{lem-hist-help-comes} there is a call to \fn{help} that finishes before the next serialization point of a successful \fn{publish$_\cdot$} (thus before $S$
is modified) and before $L[p]$ is modified. Without loss of generality we can assume that the call to \fn{help} we are considering satisfies these conditions. By inductive hypothesis and Lemma~\ref{lem-hist-help-comes}, since a successful call to \fn{publish$_\cdot$}$(v-N+1, \cdot)$ has occured before time $t$, $H[v \mod N].first \geq v-N$ at time $t$. Taking into account that $H[i].v \mod N = i$, at any later point in time one of $H[v \mod N].v = v-N$ and $H[v \mod N] \geq v$ will hold.
Thus the CAS in line~\ref{hist-help-cas} either succeeds, which causes the lemma's conclusion to start being satisfied, or fails, which means that the conclusion was already satisfied.
\end{proof} 

\begin{theorem}
The operations in the implementation of the history object serialize to the single-process object with serialization points of \fn{publish$_\cdot$} as posited earlier.
\end{theorem}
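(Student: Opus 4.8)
The plan is to check that, if the single-threaded specification is run with each completed call placed at its serialization point, every call returns what the implementation returns. The backbone of the argument is one invariant: at every instant the abstract version $V$ of the specification equals $S.v$. This holds because $S.v$ changes only through the successful CAS in line~\ref{hist-pub-later-sp}, which are exactly the serialization points of successful \fn{publish} calls, and each such CAS takes $S$ from some $(v-1,\cdot)$ to $(v,p)$, mirroring the abstract assignment $V\gets v$ with $v=V+1$; reads and failing publishes leave $S.v$ untouched. In particular $S.v$ is strictly increasing across successful publishes, so every version is published at most once, and therefore (with Observation~\ref{obs-hist-is-correct}) whenever a slot of $H$ holds a pair $(w,T)$ the value $T$ is precisely the abstract $H[w]$. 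I would treat the initial configuration as ``version $0$ published at setup'' so that this correspondence also holds at the start.

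With the invariant in hand, \fn{publish}$_p(v,T)$ splits into its three exits. If it returns false in line~\ref{hist-pub-earlier-exit}, then $v\neq s.v+1$ for the $s$ read in line~\ref{hist-pub-earlier-sp}; serializing there gives $V=s.v$ and $v\neq V+1$, so the specification also returns false. If the CAS in line~\ref{hist-pub-later-sp} fails, then since $S.v$ only increases and $S\neq s$ at the CAS we have $S.v>s.v=v-1$, so at that instant $v\le V$, again $v\neq V+1$, and false matches. If the CAS succeeds it is the serialization point, $V$ moves from $v-1$ to $v$, the specification performs the update and returns true.

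For the reads I would not fix the serialization point at the read of $S$, but choose it from the value actually returned, exploiting that $V=S.v$ is a nondecreasing integer step function. In \fn{get-current}, which reads $s$, calls \fn{help}, and returns $H[s.v\bmod N].\fn{rd}()=(v',T')$: since $S.v=s.v$ when $s$ is read, a successful \fn{publish}$(s.v,\cdot)$ has already serialized, so the following \fn{help} satisfies the hypothesis of Lemma~\ref{lem-hist-is-complete} and forces $H[s.v\bmod N].v\ge s.v$; as that field is nondecreasing and congruent to its index modulo $N$, we get $v'\ge s.v$ and $v'\equiv s.v\pmod N$. By Observation~\ref{obs-hist-is-correct} the pair $(v',T')$ is genuinely published, so $T'=H[v']$ and $V\ge v'$ by the time $H$ is read; since $V=s.v\le v'$ at the read of $S$ and $V\ge v'$ at the read of $H$, some instant inside the execution interval has $V=v'$, and serializing there returns $(V,H[V])=(v',T')$. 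For \fn{get}$(v)$: if $s.v<v$ I serialize at the read of $S$, where $V<v$ forces \none; otherwise $s.v\ge v$, Lemma~\ref{lem-hist-is-complete} gives $h.v\ge v$ for the $h$ read from $H[v\bmod N]$, and if $h.v=v$ then the same argument applied to version $v+N$ shows $s.v<v+N$, so $v\le V\le v+N-1$ at the read of $S$ and the specification returns $H[v]=h.T$; if instead $h.v>v$ then $h.v\ge v+N$, so version $v+N$ is published and $V\ge v+N$ at the read of $H$, where serializing makes $v\le V-N$ and both sides return \none.

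The crux, and the place I expect the real work, is these reads against the circular buffer $H$: a slow operation may have its target slot recycled between determining the current version and inspecting $H$. The whole purpose of the \fn{help} calls, digested through Lemmas~\ref{lem-hist-help-comes} and~\ref{lem-hist-is-complete}, is that a slot is never misleadingly stale when it matters --- once version $v$ has fallen out of the window ($V\ge v+N$), the read is guaranteed to see a version $\ge v+N$ rather than the obsolete $v$ --- while Observation~\ref{obs-hist-is-correct} guarantees we never read a value that was not genuinely published. Making these two guarantees line up so that a single serialization instant (with $V=v'$, respectively $V\ge v+N$) always lies inside the execution interval is the delicate step; everything else reduces to bookkeeping around the invariant $V=S.v$.
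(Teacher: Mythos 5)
Your proposal is correct and follows essentially the same route as the paper: the same serialization points for \fn{publish}, and for \fn{get}/\fn{get-current} a point inside the execution interval pinned down by combining Observation~\ref{obs-hist-is-correct} (nothing unpublished is ever read from $H$) with Lemma~\ref{lem-hist-is-complete} (the slot is at least as fresh as the version current when \fn{help} ran), using monotonicity of $S.v$ to find the instant where the abstract $V$ matches the returned version. The only cosmetic differences are that you make the invariant $V=S.v$ explicit and observe that the ``$h.v<v$'' failure branch of \fn{get} cannot actually occur once $s.v\ge v$, a case the paper's proof still handles separately.
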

\begin{proof}
Note that $S.v$ is at all times equal to the first argument of the latest successful \fn{publish}.

Let us first consider a call to \fn{get-current} that returns $r$. From Lemma~\ref{lem-hist-is-complete} we know that $r.v \geq v.v$.
By Observation~\ref{obs-hist-is-correct}, the successful \fn{publish$_\cdot$}$(r.v, \cdot)$ happened before the read from $H[v.v\mod N]$. If $r.v = v.v$, then this call to \fn{publish} was the most recent successful \fn{publish} at the time when $S$ was read, so we can serialize the operation there. Otherwise, it has happened (had its serialization point) after that read, so we can serialize the operation just after it.

Let us consider a calls to \fn{publish$_\cdot$}:
\begin{itemize}
\item If a call to \fn{publish$_\cdot$}$(v, \cdot)$ fails in line~\ref{hist-pub-earlier-exit}, then we can see that the most recently serialized (from the point of view of line~\ref{hist-pub-earlier-sp}) successful \fn{publish} published a version different than $v-1$.
\item If a call to \fn{publish$_\cdot$}$(v, \cdot)$ fails by failing the CAS in line~\ref{hist-pub-later-sp}, then a successful \fn{publish$_\cdot$} has occured after line~\ref{hist-pub-earlier-sp} had been executed, so the most recent successful \fn{publish$_\cdot$} at the time of \fn{CAS} has published a version greater than $v-1$.
\item If a call to \fn{publish$_\cdot$}$(v, \cdot)$ succeeds, then at the time of CAS in line~\ref{hist-pub-later-sp} the most recently published version is $v-1$.
	\end{itemize}
	This suffices to prove that the value returned by \fn{publish} is always correct with respect to the posited
	serialization order. What remains to consider are the calls to \fn{get}:
\begin{itemize}
	\item If a call to \fn{get}$(v)$ fails in line~\ref{hist-get-notyet}, then the most recent successful \fn{publish$_\cdot$} when line~\ref{hist-get-notyet-sp} executed had a smaller version number than requested, so we can serialize the call to \fn{get} at line~\ref{hist-get-notyet-sp}.
	\item If a call to \fn{get}$(v)$ fails in line~\ref{hist-get-tooold} because $h.v > v$, then (by Observation~\ref{obs-hist-is-correct}) a successful call to \fn{publish$_\cdot$}$(h.v, \cdot)$ has occured by that time and $h.v \geq v+N$, so if we serialize \fn{get} at that point, it should fail.
	\item If a call to \fn{get}$(v)$ fails in line~\ref{hist-get-tooold} because $h.v < v$, then (by Lemma~\ref{lem-hist-is-complete}) a call to \fn{publish$_\cdot$}$(v, \cdot)$ has not occured before the call to \fn{help} in line~X. We can thus serialize this \fn{get} just before the call to \fn{help} has started.
	\item If a call to \fn{get}$(v)$ succeeds, then by Observation~\ref{obs-hist-is-correct} it returns a value that was published by a successful call to \fn{publish$_\cdot$}$(v, \cdot)$. By Lemma~\ref{lem-hist-is-complete}, version $v+N$ was not published before the call to \fn{help} has started. Thus we can serialize this \fn{get} just after the successfull call to \fn{publish$_\cdot$}$(v, \cdot)$ has occured, or, if it has occured before \fn{get} started, just before the call to \fn{help} started.
\end{itemize}
\end{proof}

Obviously all operations run in $O(1)$ time. Memory complexity is $O(N+P)$, where $N$ and $P$ are the parameters defined at the beginning of this section.
We will now use this object in the main result of this work, an implementation of a write-and-$f$-array.
\chapter{The Write-and-f-array}

We will now present our main result -- an implementation of a write-and-f-array.
We will actually present a wait-free implementation of a slightly richer object -- the additional operations and return values are required for the recursive construction of the concurrent implementation.

A single-threaded specification of the structure is shown below. It uses a \fn{version} function, which is a black-box integer-valued function, subject to following conditions:
\begin{enumerate}
	\item Subsequent return values of \fn{version} are nondecreasing.
	\item If a call to \fn{version}(\textbf{false}) is followed by a call to \fn{version}(\textbf{true}), the second call must return a strictly greater integer than the first one.
\end{enumerate}

\begin{algorithmic}[1]
	\State\var{$v[N]$}{\arrayspec{N}$\T$}
	\State\var{$\text{last\_update}$}{\arrayspec{N}\int}
	\State\var{$\text{last\_version}$}{\arrayspec{N}\int}
	\State\var{$\text{last\_value}$}{\arrayspec{N}\T}
	
	\Function{write-and-f$_i$}{$T$}
		\State $v[i] \gets T$
		\State $r \gets f(v[0], v[1],\ldots, v[N-1])$
		\State $\text{last\_update}[i] \gets \text{last\_update}[i] + 1$
		\State $\text{last\_version}[i] \gets \fn{version}(\textbf{true})$
		\State $\text{last\_value}[i] \gets r$
		\State \Return $\text{last\_update}[i], \text{last\_version}[i], \text{last\_value}[i]$
	\EndFunction
	\Function{get-last}{$i$}
		\State \Return $\text{last\_update}[i], \text{last\_version}[i], \text{last\_value}[i]$
	\EndFunction
	\Function{read}{}
		\State \Return $\fn{version}(\textbf{false}), f(v[0], v[1],\ldots, v[N-1])$
	\EndFunction
\end{algorithmic}

One can observe that the version numbers group the calls to \fn{write-and-f} into groups of consecutive calls with no 
intervening calls to \fn{read}.

The concurrent implementation will be restricted by disallowing concurrent calls to \fn{write-and-f$_i$} for the same $i$.
It is conceptually very similar to Jayanti's tree-based f-array implementation. It uses a binary tree structure, with each array element assigned to a leaf and intervals of array elements assigned to internal nodes.
We will construct it recursively. The implementation for $N = 1$ is presented below. If we note that no concurrent calls to \fn{write-and-f} may be made in it, we can easily
see that it is indeed correct and that all operations take constant time.

\begin{algorithmic}[1]
	\State\var{$S$}{$\textbf{pair}<v: \int, T: \T>$} 
	\Function{write-and-f$_0$}{T'}
		\State $v, T \gets S.\fn{rd}()$
		\State $S.\fn{wr}((v + 1, T'))$
		\State \Return $v + 1, v + 1, T'$
	\EndFunction
	\Function{get-last}{$i$}
		\State $v, T \gets S.\fn{rd}()$
		\State \Return $v, v, T$
	\EndFunction
	\Function{read}{}
		\State \Return $S.\fn{rd}()$
	\EndFunction
\end{algorithmic}

The implementation for $N > 1$ is presented below (interspersed with comments):
\begin{algorithmic}[1]
	\Struct{node-value}
		\State \var{$T$}{\arrayspec{2}\T} \Comment{Child values}
		\State \var{$v$}{\arrayspec{2}\int} \Comment{Child versions}
	\EndStruct
	\Struct{last-value}
		\State \var{$n$}{\int}
		\State \var{$v$}{\int}
		\State \var{$T$}{\T}
	\EndStruct
	\State \var{$C$}{\arrayspec{2}\textsc{write-and-f-array}}  \Comment{of sizes $\lceil{}N/2\rceil{}$ and $\lfloor{}N/2\rfloor{}$, resp.}
	\State \var{$H$}{\textbf{history object}<\textsc{node-value}>} \Comment{of size $N+1$, for $N$ concurrent updates}
	\State \var{$L$}{\arrayspec{N}\textsc{last-value}} 
	\algstore{impl}
\end{algorithmic}

C[0] and C[1] are the subobjects from the recursive construction -- they are of size $\lceil{}N/2\rceil{}$ and $\lfloor{}N/2\rfloor{}$, respectively. Array elements of the enclosing
structure are mapped bijectively to consecutive array elements of these two subobjects. The mapping is defined by following functions (the element $i$ is mapped to element $child\_id(i)$ in C[$side(i)$]):
\begin{equation*}
side(i) = \begin{cases} 0&i\in\{0,\ldots,\lceil{}N/2\rceil{}-1\}\\1&i\in\{\lceil{}N/2\rceil{},\ldots,N-1\}\end{cases}
\end{equation*}
\begin{equation*}
child\_id(i) = \begin{cases} i&i\in\{0,\ldots,\lceil{}N/2\rceil{}-1\}\\i-\lceil{}N/2\rceil{}&i\in\{\lceil{}N/2\rceil{},\ldots,N-1\}\end{cases}
\end{equation*}
History object H is used to store the object's current value (the value that would be returned by \fn{read}) and past values. The version numbers exported by the history object correspond to the values returned by \fn{version}
in the specification. The elements of H aren't just values; they instead contain the values of both children together with their versions.
The array L is used to store the values \fn{get-last} should return, but the values there might be stale (we will prove bounds on their staleness later on).

\begin{algorithmic}[1]
	\algrestore{impl}
	\Function{read}{}
		\State $v, h \gets H.\fn{get-current}()$ \label{get-getownver}
		\State \Return $v, f(h.T[0], h.T[1])$
	\EndFunction
	\Function{update$_i$}{}
		\State $v, h \gets H.\fn{get-current}()$ \label{upd-getown}
		\State $h'.v[0], h'.T[0] \gets C[0].\fn{read}()$ \label{upd-getch-0}
		\State $h'.v[1], h'.T[1] \gets C[1].\fn{read}()$ \label{upd-getch-1}
		\State $\fn{help}(v \mod N)$
		\State \Return $H.\fn{publish$_i$}(v+1, h')$ \label{upd-publish}
	\EndFunction
	\Function{write-and-f$_i$}{T}
		\State $C[side(i)].\fn{write-and-f$_{child\_id(i)}$}(T)$ \label{upd-chupd}
		\If{\textbf{not} \fn{update$_i$}()} \label{updcall1}
		\State \fn{update$_i$}() \label{updcall2}
		\EndIf
		\State \Return $\fn{get-last}(i)$
	\EndFunction
	\algstore{impl}
\end{algorithmic}

The implementation of \fn{read} and \fn{write-and-f} strongly resemble the f-array. \fn{write-and-f} uses a helper function \fn{update}. The intuition behind \fn{update} is that it ``pushes'' new values from $C[0]$ and $C[1]$ to $H$.
We will show that it suffices to attempt to call \fn{update} twice to accomplish that.

\begin{algorithmic}[1]
	\algrestore{impl}
	\Function{help}{x}
		\State $l_c \gets C[side(x)].\fn{get-last}(child\_id(x))$ \label{get-last-recurse}
		\State $v_{current} \gets H.\fn{get-current}().v$
		\State \begin{varwidth}[t]{\linewidth} binary search $\{v_{current}-(N+1),\ldots, v_{current}\}$ for first $v$ such that: \label{help-binsch}\par
		\hskip\algorithmicindent $H.\fn{get}(v) \neq \none \textbf{ and } H.\fn{get}(v).v[side(x)] \geq l_c.v$ \end{varwidth}
		\If{no such $v$ exists}
			\State \Return
		\EndIf
		\State $h_{old} \gets H.\fn{get}(v-1)$ \label{help-getprev}
		\State $h_{new} \gets H.\fn{get}(v)$
		\If{$h_{old} = \none \textbf{ or } h_{new} = \none$}
			\State \Return
		\EndIf
		\If{$side(x) = 0$}
			\State $T \gets f(l_c.T, h_{old}.T[1])$
		\Else
			\State $T \gets f(h_{new}.T[0], l_c.T)$
		\EndIf
		\State $l \gets L[x].\fn{rd}$
		\If{$l.n < l_C.n$} \label{upd-last-alreadydone}
			\State $L[x].\text{CAS}(l, (l_C.n, v, T))$
		\EndIf
	\EndFunction
	\Function{get-last}{x}
		\State $\fn{help}(x)$
		\State \Return $L[x].\fn{rd}()$
	\EndFunction
\end{algorithmic}

The use of binary search in line~\ref{help-binsch} warrants explanation. The predicate employed can obviously change its value in time. Thus, the binary search will return a $v$ such that the predicate held at one point in time
for $v$ and didn't at another point for $v-1$. We will see that this is sufficient. Indeed, the result of the binary search will be important only in the cases when the value of the predicate doesn't change during
the search.

We will first prove two simple facts about \fn{update}:

\begin{lemma}
\label{lem-versions-increase}
For any two subsequent successful calls to $H.\fn{publish}$ with values $h_1$ and $h_2$, $h_1.v[i] \leq h_2.v[i]$ for both $i$.
\end{lemma}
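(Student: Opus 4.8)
The plan is to trace each published node-value back to the \fn{update} call that created it. Since $H.\fn{publish}$ is invoked only from line~\ref{upd-publish} of \fn{update}, every successful publish is produced by such a call, and the component $h.v[i]$ is exactly the version returned by the corresponding $C[i].\fn{read}()$ at line~\ref{upd-getch-0} (resp. line~\ref{upd-getch-1}) inside that \fn{update}. Call the \fn{update} producing $h_1$ by $U_1$ and the one producing $h_2$ by $U_2$, and let $t_1$, $t_2$ be their serialization points, i.e. the successful CAS in line~\ref{hist-pub-later-sp} of the underlying history object. Because successful publishes carry strictly consecutive version numbers, and $h_1$, $h_2$ are subsequent, $U_1$ publishes history version $w$ and $U_2$ publishes $w+1$.

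The heart of the argument is a real-time ordering: I would show that $U_1$'s read of $C[i]$ completes strictly before $U_2$'s read of $C[i]$ begins. On one side, $U_1$'s child reads (lines~\ref{upd-getch-0}--\ref{upd-getch-1}) lie entirely before its own publish, hence before $t_1$. On the other side, for $U_2$'s publish of version $w+1$ to succeed, the version $v$ it reads in line~\ref{upd-getown} must equal $w$ (as the call is $H.\fn{publish$_i$}(v+1,h')$). By the semantics of the history object, $H.\fn{get-current}$ returning version $w$ means its internal $S.\fn{rd}$ observed $S.v = w$, which can happen only at or after the instant $t_1$ at which $U_1$'s successful CAS set $S$ to version $w$. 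Hence $U_2$'s line~\ref{upd-getown}, and a fortiori its child reads that follow it, occur after $t_1$. Thus $U_1$'s read of $C[i]$ finishes before $U_2$'s read of $C[i]$ starts.

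With the reads separated in time, I would invoke monotonicity of child versions: $C[i].\fn{read}$ returns $\fn{version}(\textbf{false})$, and by the defining property of \fn{version} (subsequent return values are nondecreasing), together with the inductive assumption that the subobject $C[i]$ is a correct write-and-f-array, a read that finishes before another read begins returns a version no larger than the later read. Applying this to $U_1$'s and $U_2$'s reads of $C[i]$ gives $h_1.v[i] \leq h_2.v[i]$ for each $i \in \{0,1\}$, which is the claim.

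I expect the main obstacle to be the middle step -- establishing that $U_2$'s child reads genuinely begin after $t_1$. This relies on the precise contract of the history object (that a \fn{get-current} returning version $w$ is a witness that the publish of version $w$ has already serialized) combined with the observation that $U_2$'s successful publish of $w+1$ forces the version read in line~\ref{upd-getown} to be exactly $w$. Once the two reads of $C[i]$ are pinned to disjoint time intervals, the monotonicity conclusion is routine.
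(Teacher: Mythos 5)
Your proposal is correct and follows essentially the same route as the paper: both proofs pin down the real-time ordering of the two \fn{update} calls' child reads (the paper by noting that no successful publish may intervene between $g_2$ and $p_2$, you by noting that $g_2$ must observe version $w$ and hence occurs after $t_1$ --- two equivalent formulations of the same timing constraint) and then conclude by monotonicity of the children's versions. No gap to report.
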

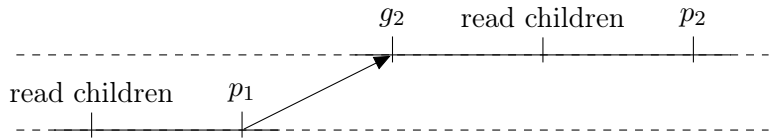
\begin{figure}[!h]
\begin{tikzpicture}
	\atomic{prev_getch}{read children}{1}{1}
	\atomic{prev_pub}{$p_1$}{3}{1}

	\atomic{get_ver}{$g_2$}{5}{0}
	\atomic{getch}{read children}{7}{0}
	\atomic{publish}{$p_2$}{9}{0}

	\draw[dashed] (0,-1) to (10,-1);
	\draw[dashed] (0,0) to (10,0);
	\draw (0.5,-1) to (3.5,-1);
	\draw (4.5,0) to (9.5,0);

	\draw[arrows={-triangle 45}] (prev_pub) to (get_ver);
\end{tikzpicture}
\caption{Diagram for proof of Lemma~\ref{lem-versions-increase} ($p_i$ -- call to \fn{publish}; $g_i$ -- call to \fn{get-current} in line~\ref{upd-getown} that corresponds to $p_i$).}
\label{fig-versions-increase}
\end{figure}
\begin{proof}
Assume otherwise. Consider the first pair $p_1, p_2$ of consecutive calls to \fn{publish} that contradicts the Lemma (see Figure~\ref{fig-versions-increase}). The only place \fn{publish} is called is line \ref{upd-publish}
in \fn{update}.
$p_2$ had its most recent call to \fn{get-current} (denoted by $g_2$) in line~\ref{upd-getown} sometime earlier. No successful publish could have occured between $g_2$ and $p_2$; otherwise $p_2$
would have failed.
Thus, $p_1$ must have occured before $g_2$ and so its execution of lines \ref{upd-getch-0} and \ref{upd-getch-1} (denoted by ``read children'' on the diagram) must have occured strictly earlier than
the same for the second call. This together with monotonicity of versions in children delivers the contradiction.
\end{proof}

\begin{lemma}
	For every execution of \fn{write-and-f}: during execution of lines \ref{updcall1} and \ref{updcall2} at least one successful call to \fn{update} occurs.
\end{lemma}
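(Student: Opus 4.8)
The plan is to argue by cases on the outcome of the first call to \fn{update} in line~\ref{updcall1}, reasoning throughout against the serialized specification of the history object $H$ rather than against its implementation; we may do so because the preceding theorem shows that $H$ serializes to that specification. Recall that each call to \fn{update} issues exactly one call to $H.\fn{publish}$ (line~\ref{upd-publish}), so a ``successful call to \fn{update}'' is precisely one whose \fn{publish} succeeds. I will exhibit such a call whose successful \fn{publish} serializes inside the interval spanned by lines~\ref{updcall1}--\ref{updcall2}.

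In the first case the call in line~\ref{updcall1} returns \textbf{true}: then its own \fn{publish} succeeded and serializes within line~\ref{updcall1}, and we are done without even reaching line~\ref{updcall2}. In the second case that call fails. It read $v$ via $H.\fn{get-current}()$ in line~\ref{upd-getown}, so by the history specification the current version equalled $v$ at that read's serialization point; it then invoked $H.\fn{publish}(v+1,\cdot)$, which by the specification succeeds exactly when the current version is still $v$ at the publish's serialization point. Since the publish failed, the version was not $v$ there, and because the history version is nondecreasing in time it must have strictly increased between this call's \fn{get-current} and its \fn{publish}. The history version advances only through a successful \fn{publish}, and \fn{publish} is invoked only from within \fn{update} (line~\ref{upd-publish}); hence some successful call to \fn{update} had its \fn{publish} serialize strictly between the \fn{get-current} and the \fn{publish} of the failed call. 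Both of those endpoints lie inside line~\ref{updcall1}, so this successful update's publish lies inside the interval of lines~\ref{updcall1}--\ref{updcall2}, as required.

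The delicate points are matters of interpretation and bookkeeping. One must read ``a successful call to \fn{update} occurs'' as ``its defining successful \fn{publish} serializes within the interval'': in the second case the witnessing update belongs to another process and may have begun before line~\ref{updcall1}, but it is its serialization point that gets pinned inside the interval, which is exactly what the later lemmas will use. I expect the substantive step to be the only one requiring care, namely turning ``the \fn{publish} in line~\ref{upd-publish} failed'' into ``a concurrent successful \fn{publish} intervened''; this rests on the two facts about $H$ that its version is monotone and that it is advanced solely by successful \fn{publish} calls. Note finally that line~\ref{updcall2} plays no role in this particular argument---one attempt already forces a successful update---so the reason \fn{write-and-f} retries twice is reserved for the subsequent, stronger lemma about actually propagating the freshly written child value into $H$.
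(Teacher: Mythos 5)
Your proof establishes a strictly weaker statement than the one the paper needs, and the weakening is exactly the point of the lemma. In your failure case you take as witness the successful \fn{update} whose \fn{publish} caused the \emph{first} attempt's \fn{publish} to fail. That update's own \fn{get-current} (line~\ref{upd-getown}) and, crucially, its reads of the children (lines~\ref{upd-getch-0} and~\ref{upd-getch-1}) may have happened long before line~\ref{updcall1} --- in particular, before the current \fn{write-and-f}'s child update in line~\ref{upd-chupd}. So the value it publishes need not reflect the freshly written child version. The lemma is used immediately afterwards to argue that by the time \fn{write-and-f$_i$} returns, the most recently published node value has $v[side(i)] \geq v_c$; for that one needs a successful \fn{update} whose \emph{entire body} (get-current, child reads, publish) lies inside the interval of lines~\ref{updcall1}--\ref{updcall2}, hence after the child write. ``Occurs during'' must be read as containment, not as ``serializes within.'' Your closing remark that line~\ref{updcall2} plays no role and that the retry is ``reserved for a subsequent, stronger lemma'' is the tell: there is no such subsequent lemma in the paper, and the second attempt is precisely what this lemma's proof needs.

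The paper's argument uses both failed attempts: if $p^F_1$ fails, some successful $p^S_3$ intervened between $g_1$ and $p^F_1$ (this much you have); if $p^F_2$ also fails, some successful $p^S_4$ intervened between $g_2$ and $p^F_2$, and --- by the same ``no successful publish between a successful update's get-current and its publish'' observation you already invoke --- the get-current $g_4$ corresponding to $p^S_4$ must come \emph{after} $p^S_3$, hence after $g_1$, hence inside the interval. That second step is the missing idea: the blocker of the second failure is guaranteed to have started its work inside the window, whereas the blocker of the first failure is not. To repair your proof, keep your machinery but apply it to the second failed attempt and then pin down where that blocker's get-current must lie.
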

\begin{figure}
\begin{tikzpicture}
	\atomic{our_gv1}{$g_1$}{5.5}{0}
	\atomic{bad_pub1}{$p^S_3$}{6.5}{1}
	\atomic{our_pub1}{$p^F_1$}{7.5}{0}
	\draw[arrows={-triangle 45}] (our_gv1) -- (bad_pub1);
	\draw[arrows={-triangle 45}] (bad_pub1) -- (our_pub1);
	\atomic{bad_gv2}{$g_4$}{7}{2}
	\draw[arrows={-triangle 45}] (bad_pub1) -- (bad_gv2);
	\atomic{our_gv2}{$g_2$}{9.5}{0}
	\atomic{bad_pub2}{$p^S_4$}{11}{2}
	\atomic{our_pub2}{$p^F_2$}{12}{0}
	\draw[arrows={-triangle 45}] (our_gv2) -- (bad_pub2);
	\draw[arrows={-triangle 45}] (bad_pub2) -- (our_pub2);

	\draw[dashed] (2.5,0) -- (13,0);
	\draw (3,0) -- (12.5,0);

	\draw[dashed] (5.5,-1) -- (7.5,-1);
	\draw (6,-1) -- (7,-1);

	\draw[dashed] (6,-2) -- (12,-2);
	\draw (6.5,-2) -- (11.5,-2);
\end{tikzpicture}
\caption{Consequences of two failed \fn{update} calls ($g_i$ -- call to \fn{get-current} in line~\ref{upd-getown}, $p^S_{i}$ -- successful call to \fn{publish} in line~\ref{upd-publish}, $p^F_{i}$ -- failed call to \fn{publish} in line~\ref{upd-publish}).}
\label{fig-only-two-pubs}
\end{figure}
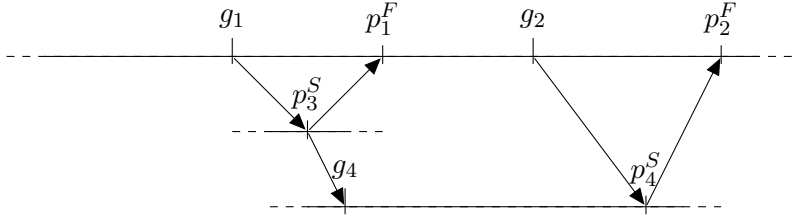

\begin{proof}
	If one of the calls to \fn{update} from the lines in question suceeds, then the Lemma trivially holds. Thus we assume that they have both failed. The situation is depicted
	in Figure~\ref{fig-only-two-pubs}. $g_1, p^F_1, g_2 \text{ and } p^F_2$ correspond to these two failed calls to \fn{update}. In order for $p^F_1$ to fail, a \fn{publish} must have
	succeeded between $g_1$ and $p^F_1$ -- let's call it $p^S_3$. Similarly, a \fn{publish} must have suceeded between $g_2$ and $p^F_2$. Alas, this \fn{publish} (denoted $p^S_4$) must have had
	its corresponding call to \fn{get-current} (denoted $g_4$) after $p^S_3$. The call to \fn{update} that called $g_4$ and $p^S_4$ satisfies the Lemma's conclusion.
\end{proof}

Let us consider an execution of \fn{write-and-f$_i$}. Let $(n_c, v_c, T_c)$ be the return value of the corresponding call to $C[side(i)].\fn{write-and-f$_{child\_id(i)}$}$. Then by the time the call to
\fn{write-and-f$_i$} finishes, the value of the most recently published node has $v[side(i)] \geq v_c$. We will consider the first call to \fn{publish} that publishes such a node value as the serialization
point of the \fn{write-and-f}. Obviously, many calls to \fn{write-and-f} can then serialize at the same instant in time. We will order them by taking first these with $side(i) = 0$ in the order in which
their corresponding calls to \fn{write-and-f} happened in $C[0]$ and then those with $side(i) = 1$ in the corresponding order. The choice of order on elements of $C$ is arbitrary, albeit it is reflected
in the computation of $T$ in \fn{help}.

Before we prove that the implementation is correct, we need the following two lemmas about the array $L$.

\begin{lemma}
	If \fn{help} is called after the serialization point of the $n^{\text{th}}$ call to \fn{write-and-f$_i$}, then $L[i].n \geq n$ after the call to \fn{help} completes. \label{last-is-complete}
\end{lemma}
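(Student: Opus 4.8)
The plan is to fix the level-$N$ node and prove the statement by induction on $n$, nested inside the structural induction on $N$ that establishes correctness of the whole implementation; in particular I may assume the inductive hypothesis that each child $C[side(i)]$ serializes to the specification and that the history object behaves as proved in the previous chapter. Let $(n, v_c, T_c)$ be the value returned by the child call in line~\ref{upd-chupd} of the $n$-th \fn{write-and-f$_i$} (its first component is $n$ because the absence of concurrent \fn{write-and-f$_i$} puts these calls in bijection with the writes to element $child\_id(i)$ of $C[side(i)]$), let $\sigma_c$ be the serialization point of that child write, and let $\sigma$ be the serialization point of the \fn{write-and-f$_i$}, i.e.\ the first \fn{publish} of a node value with $v[side(i)] \ge v_c$. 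The \fn{help} of the statement is one with argument $i$, since only such calls touch $L[i]$. I would first record that $L[i].n$ is nondecreasing: the sole writer is the CAS guarded in line~\ref{upd-last-alreadydone}, which fires only when the read value $l$ has $l.n < l_c.n$ and overwrites it with a triple whose first component is $l_c.n$, so every successful CAS strictly increases $L[i].n$.

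Next I would show that our \fn{help}$(i)$ reads a fresh child summary, i.e.\ the $l_c$ of line~\ref{get-last-recurse} has $l_c.n \ge n$. The version conditions give that any $C[side(i)].\fn{read}()$ returning version $\ge v_c$ serializes at or after $\sigma_c$: versions are nondecreasing, and a read serializing before $\sigma_c$ is a \fn{version}(\textbf{false}) preceding the \fn{version}(\textbf{true}) that produced $v_c$, hence returns a strictly smaller value. Since the field $v[side(i)]$ of a published node is exactly such a read (line~\ref{upd-getch-0} or~\ref{upd-getch-1}), the publish realizing $\sigma$ occurs after $\sigma_c$. Our \fn{help}$(i)$ starts after $\sigma$, so its call to $C[side(i)].\fn{get-last}(child\_id(i))$ happens after $\sigma_c$; as there are no concurrent writes to element $child\_id(i)$, the inductive hypothesis on the child makes it return $l_c.n \ge n$.

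The engine of the induction is the fact that $L[i].n \ge n-1$ already holds at time $\sigma$. For $n=1$ this is the trivial $L[i].n \ge 0$. For $n \ge 2$, consider the final $\fn{get-last}(i)$ — hence $\fn{help}(i)$ — executed by the $(n-1)$-th \fn{write-and-f$_i$}. The successful \fn{update} that the previous lemma guarantees during lines~\ref{updcall1}--\ref{updcall2} reads $C[side(i)]$ after the child write of line~\ref{upd-chupd} has completed, so it publishes a node with $v[side(i)] \ge v_{c,n-1}$; thus $\sigma_{n-1}$ occurs no later than that publish, which in turn precedes the final \fn{help}$(i)$. Applying the lemma at $n-1$ to this \fn{help}$(i)$ yields $L[i].n \ge n-1$ upon its completion, which is before the $n$-th \fn{write-and-f$_i$} begins and hence before $\sigma$; monotonicity then propagates $L[i].n \ge n-1$ to $\sigma$ and beyond.

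Finally I would analyse the concluding CAS of our \fn{help}$(i)$. If its guard fails then $L[i].n \ge l_c.n \ge n$ at the read, and if the CAS succeeds then $L[i].n$ becomes $l_c.n \ge n$; monotonicity closes both cases. The main obstacle is a \emph{failed} CAS, since \fn{help} does not retry: some concurrent \fn{help}$(i)$ overwrites $L[i]$ between our read and our CAS, and a priori it could install a first component $< n$, destroying the claim. This is exactly where the induction pays off. The competing CAS succeeds at a time after $\sigma$, where $L[i].n \ge n-1$ holds; since a successful CAS replaces the very value it read, that read already saw $L[i].n \ge n-1$, and the guard of line~\ref{upd-last-alreadydone} then forces the competitor's own target to exceed $n-1$, i.e.\ to be $\ge n$. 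Hence the competitor installs $L[i].n \ge n$, and monotonicity keeps $L[i].n \ge n$ until our \fn{help}$(i)$ returns. The delicate point to get right is precisely the alignment of the per-element single-writer assumption (which gives $l_c.n \ge n$), the inductive floor $L[i].n \ge n-1$ at $\sigma$, and the guard in line~\ref{upd-last-alreadydone}; once these three are in place, the concurrent-helper case that looks like the real danger collapses.
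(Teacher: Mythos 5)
Your monotonicity argument for $L[i].n$, the derivation of $l_c.n \ge n$ from the child's serialization point, the inductive floor $L[i].n \ge n-1$ from $\sigma$ onward, and the three-way analysis of the concluding CAS (where you treat the failed-CAS case more explicitly than the paper does) are all sound. The gap is that your final case analysis silently assumes the \fn{help}$(i)$ call under consideration actually reaches the read of $L[x]$ and the CAS. It need not: \fn{help} has early exits before that point --- the binary search in line~\ref{help-binsch} may find no suitable version, and either of the \fn{get}s around line~\ref{help-getprev} may return \none. These exits occur in ordinary executions, because the history object retains only $N+1$ versions: if the \fn{help}$(i)$ call is slow, or simply starts long after $\sigma$, and more than $N$ further versions are published in the meantime, then the versions $v$ and $v-1$ straddling $\sigma$ are evicted, $H.\fn{get}(v-1)$ returns \none, and \fn{help} returns without ever touching $L[i]$. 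For such a call your argument concludes nothing, yet the lemma still claims $L[i].n \ge n$ at its completion.

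The paper closes exactly this hole with a step you are missing. Writing $p(u)$ for the time of the successful \fn{publish} of version $u$, every \fn{update} that publishes version $u+1$ with $u \equiv i \pmod N$ first runs \fn{help}$(i)$ to completion, so some \fn{help}$(i)$ executes entirely inside the window from $p(v)$ to $p(v+N)$, where $p(v)=\sigma$. For a call confined to that window the binary search and both \fn{get}s are guaranteed to succeed, since the required versions are still among the $N+1$ retained ones; hence that call does reach the guard and the CAS, and your concluding analysis applies to it. An arbitrary \fn{help}$(i)$ starting after $\sigma$ then either lies within this window itself or completes after the in-window call does, and monotonicity of $L[i].n$ transfers the bound. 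Without this reduction the proof is incomplete; with it, the rest of your argument goes through.
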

\begin{proof}
	We will prove this Lemma by induction on $n$. The proof for the base case will be very similar to the induction step, so we present both of them simultaneously.
	Let $p(v)$ be the time the successfull call to \fn{publish}$(v, \cdot)$ occured. Consider an \fn{update$_i$} that was serialized at $p(v)$.
	For every $u \equiv i \left(\mbox{mod } N\right)$, a call to $\fn{help}(i)$ starts and finishes between $p(u-1)$ and $p(u)$. Thus, there is such a call that starts and finishes between $p(v)$ and $p(v+N)$.
	Without loss of generality we can assume that the call to \fn{help} from the Lemma's hypothesis executes in this time interval. In that case, the binary search from line~\ref{help-binsch} will not see any
	$\fn{get}(x)$ for $x \geq v-1$ returning \none and will	return version $v$. For the same reason, both \fn{get}s will succeed.

	We want to prove that $L[i].n \geq n-1$ from $p(v)$ on. For the base case this holds, because the initial values satisfy this requirement. For the inductive step, we have to use the hypothesis:
	at $p(v)$, the $n-1^{\text{th}}$ call to \fn{write-and-f$_i$} has completed, so a call to $\fn{help}(i)$ has completed between the serialization point of that call to \fn{write-and-f$_i$}
	and $p(v)$. Thus $L[i].n \geq n-1$ from $p(v)$ on. If the condition in line~\ref{upd-last-alreadydone} is not satisfied or the CAS in the next line fails, the Lemma hold.
	Otherwise, by the time the CAS succeeds the Lemma will obviously hold.
	
\end{proof}

\begin{lemma}
	If a call to $\fn{help}(i)$ sets $L[i]$ to $(n, v, T)$, then by the time the call has finished: \label{last-is-correct}
\begin{enumerate}
	\item \fn{write-and-f$_i$} has been called at least $n$ times.
	\item Let $(n_c, v_c, T_c)$ be the return value of the $n^{\text{th}}$ call to $C[side(i)].\fn{write-and-f}_{child\_id(p)}$. Let $v'$ and $h_{new}$ be the version and node value published at the serialization point of the $n^{\text{th}}$
		\fn{write-and-f$_i$} and $h_{old}$ be the node value published with version $v'-1$. Then $v = v'$ and:
		\begin{itemize}
			\item if $side(p) = 0$ then $T = f(T_c, h_{old}.T[1])$,
			\item if $side(p) = 1$ then $T = f(h_{new}.T[0], T_c)$.
		\end{itemize}
\end{enumerate}
\end{lemma}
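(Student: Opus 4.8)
The plan is to fix the successful CAS at the end of the \fn{help}(i) call (the only place $L[i]$ is written) and to reason as part of the induction on $N$, assuming by the induction hypothesis that the two child subobjects $C[0]$ and $C[1]$ correctly implement smaller write-and-f-arrays, and using the already-proved correctness of the history object $H$. Throughout, let $l_c = (n, v_c, T_c)$ be the value $C[side(i)].\fn{get-last}(child\_id(i))$ returned at the start of this \fn{help}(i); the CAS writes exactly $(l_c.n, v, T)$, so the $n$ of the conclusion is $l_c.n$, and by the induction hypothesis $(n, v_c, T_c)$ is the triple returned by the $n^{\text{th}}$ call to $C[side(i)].\fn{write-and-f$_{child\_id(i)}$}$ (so $n_c = n$, $v_c = l_c.v$, $T_c = l_c.T$).

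For claim 1, I would observe that line~\ref{upd-chupd} issues exactly one call to the child's write-and-f per call to \fn{write-and-f$_i$}, and that these calls are sequential (no concurrent \fn{write-and-f$_i$} for the same $i$), so the $n^{\text{th}}$ child call is driven by the $n^{\text{th}}$ enclosing call. Since \fn{get-last} on the child can report the $n^{\text{th}}$ child write only after it has completed, the enclosing \fn{write-and-f$_i$} has been invoked at least $n$ times.

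For claim 2 the heart is showing that the binary search returns exactly $v'$, the first published version whose node value has $v[side(i)] \geq v_c$ (which is the serialization version of the $n^{\text{th}}$ \fn{write-and-f$_i$}, since $v_c = l_c.v$). First I would establish $v \geq v'$: because node values are immutable once published, the predicate witnessed true for $v$ means $h_{new}.v[side(i)] = H.\fn{get}(v).v[side(i)] \geq v_c$, so $v$ is a published version past the threshold, whence $v'$ exists, is published, and $v' \leq v$. For $v \leq v'$ I would invoke Lemma~\ref{lem-versions-increase}: the field $v[side(i)]$ is nondecreasing across successive published node values, so the content predicate is monotone in the version with threshold exactly $v'$; combined with the fact that every candidate is at most $v_{current}$ and hence already published at every later probe (the current version only grows), the predicate can be false for a version $\geq v'$ only if that version has aged out of the history window. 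Ruling this out pins $v = v'$, so that $h_{new} = H.\fn{get}(v')$ and $h_{old} = H.\fn{get}(v'-1)$ are the node values published at versions $v'$ and $v'-1$; substituting $l_c.T = T_c$ into the two branches of the $T$ computation then yields the two claimed formulas.

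The main obstacle is precisely this last step: ensuring the search does not overshoot $v'$ through window staleness. Since all candidates lie at or below $v_{current}$ they are always published, so the only way the search can slide past $v'$ is if $v'-1$ (or $v'$) falls out of the $(N+1)$-element history window between the moment it is probed and the moment $h_{old}$ and $h_{new}$ are read in line~\ref{upd-last-alreadydone}'s neighbourhood, i.e. if the current version advances by more than $N$ during this \fn{help} call. I expect to close this by bounding how far the version can advance while a single \fn{help}(i) executes — using that $v_c$ is reported through the child's get-last and that published node values with $v[side(i)] \geq v_c$ first appear at $v'$ — in the same spirit as the timing argument of Lemma~\ref{last-is-complete}, so that $v'-1,\ldots,v_{current}$ are simultaneously gettable throughout the relevant portion of the search and both subsequent \fn{get}s succeed.
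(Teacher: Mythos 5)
Your set-up, your claim~1 argument, and the two bounds you establish for the search result --- $v \geq v'$ from the immutability of published node values, and the impossibility of returning a version below $v'$ from Lemma~\ref{lem-versions-increase} --- all match the intended argument. The gap is exactly in the step you flag as the main obstacle: you propose to rule out the search overshooting $v'$ by bounding how far the current version can advance during a single \fn{help}$(i)$ call. No such bound exists. A \fn{help}$(i)$ may be invoked from \fn{get-last} by a thread that is descheduled for arbitrarily long in the middle of the binary search, during which other threads publish arbitrarily many new versions; then $v'-1$ (or $v'$ itself) genuinely ages out of the $(N+1)$-element history window, and the search can return some $v > v'$. Trying to prove that the search always lands on $v'$ and that both \fn{get}s always succeed is attempting to prove a false statement. (Lemma~\ref{last-is-complete} can afford a timing argument of the kind you invoke only because it is free to choose, without loss of generality, a \fn{help} call that starts and finishes inside the window $[p(v), p(v+N)]$; the present lemma must cover \emph{every} \fn{help} call that performs the CAS.)

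The paper's resolution is different and exploits the conditional form of the statement: the lemma asserts something only when $L[i]$ is actually written. If the binary search returns $v > v'$, that is because some probe of $v-1 \geq v'$ found the predicate false; since the node published at $v-1$ already satisfies the content condition, that probe must have seen $H.\fn{get}(v-1) = \none$, and since $v-1 < v_{current}$ the only possible reason is that $v-1$ had aged out of the window. Versions only advance, so the call $H.\fn{get}(v-1)$ in line~\ref{help-getprev} also returns \none{} and \fn{help} exits before reaching the CAS in line~\ref{upd-last-alreadydone}. Hence whenever the CAS does execute, the search returned exactly $v'$ and the two \fn{get}s returned precisely the node values published at versions $v'$ and $v'-1$; substituting $T_c = l_c.T$ into the two branches then gives the claimed formulas, as you say. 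So the fix is to replace your ``bound the version advance'' step by the observation that when the window is exceeded, \fn{help} returns early and the lemma's hypothesis is simply not met.
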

\begin{proof}
	For the first part, we need to note that by the time line~\ref{get-last-recurse} executes, the $n^{\text{th}}$ \fn{write-and-f$_{child\_id(i)}$} has already executed in the appropriate child.
	If the binary search fails or returns version different from $v'$, \fn{help} will exit early. Indeed, if the binary search returns a different $v$, $\fn{get}(v-1)$ will fail.
	This immediately proves the first part of the Lemma and shows that $h_{old}$ and $h_{new}$ in the code have the same values as $h_{old}$ and $h_{new}$ in the Lemma, which proves the second part.
\end{proof}

\begin{theorem}
	The operations in multi-process write-and-f-array serialize to the single-process specification with serialization point of \fn{write-and-f} as posited earlier.
\end{theorem}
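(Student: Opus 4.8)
The plan is to argue by induction on $N$. The base case $N=1$ is already handled by the direct CAS-based implementation, so I would concentrate on the inductive step, assuming that both child objects $C[0]$ and $C[1]$ serialize to the single-process specification and that the history object $H$ serializes (by the theorem of the previous chapter). The whole argument then rests on one central invariant about the contents of $H$: for every node value $h$ published with version $v$, the component $h.T[s]$ equals $f$ applied to the elements of child $C[s]$ in the state reached after exactly those $C[s].\fn{write-and-f}$ calls that serialize (inside $C[s]$) at or before child-version $h.v[s]$. This follows from the inductive hypothesis, since $h.T[s]$ and $h.v[s]$ are precisely the value and version returned by the $C[s].\fn{read}()$ executed in line~\ref{upd-getch-0} or~\ref{upd-getch-1} of the \fn{update} that produced $h$; Lemma~\ref{lem-versions-increase} then guarantees that the version components are nondecreasing along the sequence of successful publishes, so the published nodes describe a monotone progression of global states. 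Together with the earlier lemma that at least one \fn{update} succeeds during lines~\ref{updcall1}--\ref{updcall2}, this makes the posited serialization point of \fn{write-and-f$_i$} well defined.

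With the invariant in hand I would verify \fn{read} first. Taking its serialization point to be the call to \fn{get-current} in line~\ref{get-getownver}, it returns $(v, f(h.T[0], h.T[1]))$ for the most recently published node $h$. By the invariant and associativity of $f$, the value $f(h.T[0], h.T[1])$ equals $f$ of all $N$ elements in the state determined by the write-and-f operations whose serialization points are publishes of version at most $v$, which is exactly the serialized state at the chosen point. The version component is dealt with by observing that the versions exported by $H$ play the role of the \fn{version} outputs: a \fn{read} sees the current version (as \fn{version}(\textbf{false})) while any later write publishes a strictly larger one (as \fn{version}(\textbf{true})), so both monotonicity constraints on \fn{version} hold by construction of the publish mechanism.

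Next I would dispatch \fn{write-and-f$_i$} and \fn{get-last}. A call to \fn{write-and-f$_i$} ends with \fn{get-last}($i$), which runs \fn{help}($i$) and returns $L[i]$. Writing $n$ for the number of \fn{write-and-f$_i$} calls so far (well defined since such calls for a fixed $i$ never overlap), Lemma~\ref{last-is-complete} gives $L[i].n \geq n$ after this \fn{help}, while part~1 of Lemma~\ref{last-is-correct} forces $L[i].n \leq n$; hence $L[i].n = n$, and part~2 of Lemma~\ref{last-is-correct} pins down the returned version and value. Matching these against the specification, $\text{last\_update}[i]=n$ is the call count, $\text{last\_version}[i]$ is the publish version $v'$ at the serialization point (again read as a \fn{version}(\textbf{true}) output), and $\text{last\_value}[i]$ is the stored $T$. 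The form of $T$ supplied by Lemma~\ref{last-is-correct}, namely $f(T_c, h_{old}.T[1])$ for $side(i)=0$ and $f(h_{new}.T[0], T_c)$ for $side(i)=1$, is exactly $f$ of all elements in the post-write state once the batch order is taken into account: side-$0$ writes are serialized ahead of side-$1$ writes at a shared publish, so the unchanged child is taken from before the batch ($h_{old}.T[1]$) for a side-$0$ write and from after it ($h_{new}.T[0]$) for a side-$1$ write. Applying the same two lemmas to the most recently serialized \fn{write-and-f$_i$} shows that a standalone \fn{get-last}($i$) returns that triple and can be serialized immediately after it.

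The main obstacle I anticipate is not any single operation but the bookkeeping that ties the child version components to the posited serialization points. Concretely, I must show that ``the first publish with $v[side(i)] \geq v_c$'' genuinely marks the instant at which the $i$-th element's new value becomes globally visible, i.e. that a child-$s$ operation at child-version $v_c$ is counted in node $h$ precisely when $h.v[s] \geq v_c$, so that the set of operations reflected in $h$ coincides exactly with those serialized at publishes of version at most $v$. This requires combining Lemma~\ref{lem-versions-increase} (monotone version components) with the inductive hypothesis (each child's own serialization order) to certify that the node values published by $H$ are mutually consistent snapshots, and to confirm that the batch ordering is compatible simultaneously with the \fn{read} invariant and with the value formula of Lemma~\ref{last-is-correct}; stitching the two children together with $f$ then yields a single coherent serialized history for the enclosing array.
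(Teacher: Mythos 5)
Your proposal is correct and follows essentially the same route as the paper's proof: the same serialization point for \fn{read} (the \fn{get-current} call), the same use of Lemmas~\ref{last-is-complete} and~\ref{last-is-correct} to force $L[i].n = n$ and pin down the returned version and value, and the same batching argument via Lemma~\ref{lem-versions-increase} with side-$0$ writes ordered before side-$1$ writes. The only difference is presentational: you make the induction on $N$ and the invariant on published node values explicit, which the paper leaves implicit in the recursive construction.
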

\begin{proof}
	Let us choose the call to \fn{get-current} as the serialization point of \fn{read}. 
	We will first prove that these are correct serialization points for \fn{write-and-f} and \fn{read}.
	Consider the $n^{\text{th}}$ call to \fn{write-and-f$_i$} and let $(n', v', T')$ be its return value. As this triple was read from $L[i]$, we can use Lemmas~\ref{last-is-complete} and~\ref{last-is-correct} to show that
	$n' = n$ (note that executions of \fn{write-and-f$_i$} for the same $i$ are disjoint in time). From Lemma~\ref{last-is-correct} we get that $v'$ is equal to the
	version published at the serialization point of the call to \fn{write-and-f}. This obviously implies that \fn{version} is nondecreasing for subsequent calls to \fn{write-and-f} and \fn{read}.
	As only the calls to \fn{write-and-f} that serialize on a single \fn{publish} return equal $v$, no \fn{read} call can serialize in between. This proves both properties required from \fn{version}.

	We still need to prove
		that $T'$ and the values returned by \fn{read} are consistent with the posited serialization order.
		Consider a set of \fn{write-and-f} calls that are serialized together. Let $h_{new}$ and $h_{old}$ be the history elements published, respectively, at the serialization point and most recently before it. Let $a_i$ be the sequence of return values
	of $C[0].\fn{write-and-f}$ calls corresponding to \fn{write-and-f} calls in question, in serialization order. Let $b_i$ be a similar sequence for $C[1]$. By Lemma~\ref{lem-versions-increase} these are exactly the calls to
	$C[j].\fn{write-and-f}$ serialized with versions in $\left(h_{old}.v[j], h_{new}.v[j]\right]$ for $j \in \{0, 1\}$. By Lemma~\ref{last-is-correct} the sequence returned
by the \fn{write-and-f} calls is $f(a_0, b_0), $ $f(a_1, b_0),$ $\ldots, $ $f(h_{new}.T[0], b_0), $ $f(h_{new}.T[0], b_1), \ldots$ in the order of serialization. 
Additionally, $f(h_{new}.T[0], h_{new}.T[1])$ is the value that would be returned by any \fn{read} calls until the next serialization point of \fn{write-and-f}.
This suffices to show that the return values of consecutive calls to \fn{write-and-f} are actually the results of applying the updates.
This proves that values returned by \fn{get} and \fn{update} are correct wrt. posited serialization order.
	
	We still need to show that we can correctly serialize the calls to \fn{get-last}. Lemma~\ref{last-is-correct} implies that two calls to $\fn{get-last}(i)$ will never return different triples with equal $n$. As \fn{write-and-f} returns
	the result of a call to \fn{get-last}, we just need to show that \fn{get-last}$(i)$ can be serialized so that the $n$ in its return value is the number of previously serialized calls to \fn{write-and-f$_i$}. From
	Lemma~\ref{last-is-complete} we get that $n$ is at least the number of calls to \fn{write-and-f$_i$} that serialized before \fn{get-last} started. From Lemma~\ref{last-is-correct} we know that $n$ was at most the number of such calls
	that serialized before \fn{get-last} finished. Thus, there is a point in the execution interval of \fn{get-last} when $n$ is equal to the number of such calls that have already serialized. We choose any such point in time to be the serialization
	point of the call to \fn{get-last}.
\end{proof}

The structure uses $O(N\log{}N)$ memory. A \fn{get-last} operation takes $O(\log^2 N)$ time, a \fn{read} operation takes $O(1)$ time and a \fn{write-and-f} operation takes $O(\log^3 N)$ time.

We can construct a fetch-and-add object for $P$ threads by using a write-and-f-array of size $P$ with addition as the operation $f$. Every thread is assigned an element in the array, and modifies only that element. This gives us
a fetch-and-add object for $P$ threads with $O(\log^3 P)$ step complexity and $O(P\log P)$ memory complexity. Additionally, the object implements a method that retrieves the
current value in $O(1)$ time.

\chapter{Implementation}
A fetch-and-add object was implemented using the above-mentioned construction from the write-and-f object. The implementation is written in C++ and uses the C++11 standard library support for atomic operations. It is published on Github \footnote{The version current as of the time this work is written is available at: \verb+https://github.com/robryk/parsum/tree/d3433f7f7b137b52a73cfb244d46081528c696c3+ }.

Direct implementation of previously described algorithm would require CAS objects of size larger than the
64-bit Intel processors support. However, the way most of these objects are accessed in the implementation makes
it possible to split them into multiple CAS objects. The only CAS objects that don't afford this transformation
are the version number holders from the history object. We use 64-bit versions and thus even these objects are
no larger than 128 bits. Thus, our implementation can run on 64-bit Intel architecture processors, but not on 32-bit ones.

Unfortunately, the C++ standard library support for atomic operations doesn't allow atomic reads of a part of a larger atomic variable (specifically, reads of 64-bit halves of a 128-bit variable).
As 128-bit atomic reads on amd64 are very costly
(they use a 128-bit wide CAS), limiting their number was very important for the efficiency's sake. Thus, the implementation makes an unwarranted assumption that an atomic variable
has the same memory layout as a normal, non-atomic variable of the same size. This assumption holds for gcc 4.6 on amd64.

The correctness of the implementation was tested experimentally both on real hardware and by using the Relacy Race Detector\cite{rrd}. Relacy Race Detector is a framework for testing multi-threaded programs in C++11.
It substitues its own
implementation of synchronization primitives and atomic variables and runs user-supplied tests multiple times, with various interleavings of threads.
It can detect data races on non-atomic variables, deadlock conditions, and failed user-supplied invariant and assertion
checks. One notable feature is the support for atomic operations with reduced consistency -- Relacy can simulate acquire/release semantics, as specified in the C++11 atomic variables library.
Our fetch-and-add implementation can be compiled both to run on bare hardware and to run in Relacy. The use of Relacy not only provided confidence about correctness of the implementation, but also allowed us to downgrade
consistency guarantees of some writes.

The implementation was benchmarked on a machine with 4 12-core AMD Opteron 6174 processors.
The benchmark created a fetch-and-add object and started a preset number of threads. Each thread incremented the counter in a loop, counting its operations locally. After 20 seconds elapsed, all threads were signalled
to stop and their operation counters were summed. For comparison, the same benchmark was run with the fetch-and-add object implemented by a simple read-modify-write loop. All benchmarks were run when the machine was
minimally loaded (had 5 minute load average smaller than 0.5 at the start of the benchmark).

The results of the benchmark are shown in Figure~\ref{fig-meas}. The figure contains a plot of average time it takes to complete one operation as a function of the number of concurrently executing threads. 
The horizontal scale of the figure is proportional to the square of the logarithm of the number of threads. The reason for this is that the optimistic time complexity
of our fetch-and-add object is $\Theta(\log^2 P)$. We can see that the graph for our implementation approximates a straight line up to about 30 threads, where it starts to diverge upward. This peculiar divergence disappears
if we assign one core to each thread and force it to run only there (by setting CPU affinity). Other affinity settings yielded graphs similar to one of these two. We were unable to explain this behaviour. It might
be worth noting that the anomaly in the no-affinity case happens when we start using all 4 physical CPUs.

It can be easily seen from the graph that our fetch-and-add object is far from practical. It is about 100 times slower than the naive lock-free implementation for small numbers of threads and about 10 times slower
for larger numbers of threads. We believe that this implementation is suboptimal, although achieving similar performance as the naive implementation seems impossible. One change that could improve the performance
of this implementation is changing the arity of the tree. The construction can easily be adapted to trees with larger arity and this might decrease the number of expensive CAS operations at the expense of the number
of atomic reads, which are comparatively cheap on Intel processors.

\begin{figure}
\includegraphics{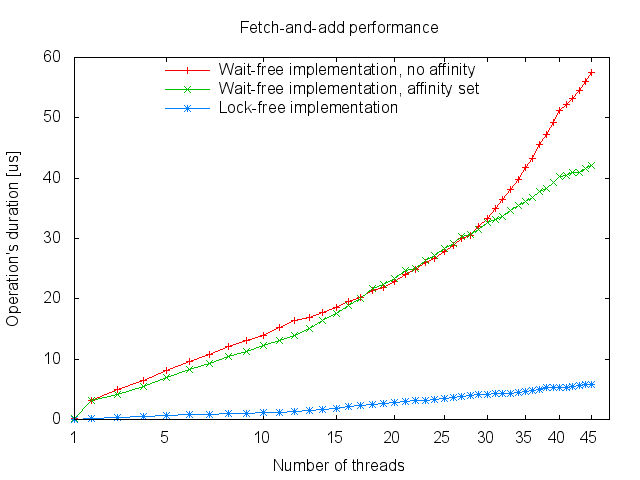}
\caption{Results of the benchmark}
\label{fig-meas}
\end{figure}

\chapter{Extensions and Open Problems}

Our implementation of write-and-f-array can be modified by splitting the work done by \fn{help} into $O(\log N)$ separate pieces (updates on consecutive levels of the tree) and running only a single such piece in \fn{update}.
If we then enlarge the history size to $O(N\log N)$, Lemma~\ref{last-is-complete} still holds. This modification increases the memory complexity to $O(N\log^2N)$ and decreases the step complexity of
\fn{write-and-f} to $O(\log^2P)$.

The structure can be straightforwardly modified to use LL/CS instead of CAS, albeit it then requires the ability to have two outstanding LLs at the same time. The modified version can be further modified to use
bounded version numbers, from a range of size $O(N)$. Unfortunately, popular architectures that implement LL/SC (such as PowerPC) allow only one outstanding LL at a time.

The effect of the arity of the tree on the performance
seems to be worth investigating.

We pose also two purely theoretical problems:

\begin{itemize}
	\item Our implementation of write-and-f-array is single-writer, so it is natural to consider the multi-writer write-and-f-array. Can we construct a multi-writer write-and-f-array in subquadratic memory with similar step complexities of the operations?

	\item The memory consumption of write-and-f-array implemented with atomic registers and CAS or LL/CS objects is bounded from below by a linear function of the number of processes that can execute \fn{write-and-f} concurrently\cite{lowerbound}, so it must be $\Omega(N)$. Can we achieve lower memory and/or step complexities?
\end{itemize}

\begingroup\let\clearpage\relax\chapter{Acknowledgements}\endgroup
The author would like to thank his advisor, dr. Grzegorz Herman for many fruitful discussions and help while preparing this work. He would also like to thank Szymon Acedański for his help in improving the presentation of this work.

This work was supported by the Polish Ministry of Science and Higher Education program ``Diamentowy Grant''.

\newpage


\begin{thebibliography}{9}
	\bibitem{TheArt} Herlihy M., Shavit N. ``The Art of Multiprocessor Programming'', Morgan Kaufmann Publishers Inc., 2008
	\bibitem{cas-univ} Herlihy M. ``Wait-free synchronization'', ACM Trans. Program. Lang. Syst. 13, 1, pp. 124--149 (1991)
	\bibitem{snapshot-impl} Riany Y., Shavit N., Touitou D. ``Towards a practical snapshot algorithm'', Theoretical Computer Science 269, pp. 163--201 (2001)
	\bibitem{snapshot-idea} Afek Y., Attiya H., Dolev D., Gafni E., Merritt M., Shavit N. ``Atomic snapshots of shared memory'', Journal of the ACM 40, pp. 873--890 (1993)
	\bibitem{write-and-snap} Afek Y., Weisberger E. ``The instancy of snapshots and commuting objects'', Journal of Algorithms 30.1, pp. 68--105 (1999)
	\bibitem{closed-object} Chandra T. D., Jayanti P., Tan K. ``A polylog time wait-free construction for closed objects'', Proceedings the 17th PODC, pp. 287--296 (1998)
	\bibitem{f-array} Jayanti P. ``f-arrays: Implementation and applications'', Proceedings of the 21st PODC, pp. 270--279 (2002)
	\bibitem{ellen-fai} Ellen F., Ramachandran V., Woelfel P. ``Efficient Fetch-And-Increment'', Lecture Notes in Computer Science 7611, pp. 16--30 (2012)
	\bibitem{lowerbound} Fich F. E., Hendler D., Shavit N. ``Linear lower bounds on real-world implementations of concurrent objects'', Foundations of Computer Science, pp. 165--173 (2005)
	\bibitem{faa-sync} Freudenthal, E., Gottlieb, A. ``Process coordination with fetch-and-increment'', Proceedings of ASPLOS-IV, pp. 260–-268 (1991)
	\bibitem{faa-queue} Goodman, J., Vernon, M., Woest, P. ``Efficent synchronization primitives for large-scale cache-coherent multiprocessors'', Proceedings of ASPLOS-III, pp. 64-–75 (1989)
	\bibitem{rrd} Vyukov D., Relacy Race Detector, \verb+http://www.1024cores.net/home/relacy-race-detector+
\end{thebibliography}
\end{document}